\documentclass[journal,twoside,web]{ieeecolor}

\usepackage{cite}
\usepackage{bm}
\usepackage{amsthm}
\usepackage{amsmath,amssymb,amsfonts}
\usepackage{algorithmic}
\usepackage{setspace}
\usepackage{graphicx}
\usepackage{textcomp}
\usepackage{xcolor} 
\def\BibTeX{{\rm B\kern-.05em{\sc i\kern-.025em b}\kern-.08em
    T\kern-.1667em\lower.7ex\hbox{E}\kern-.125emX}}
\usepackage{xcolor}
\definecolor{subsectioncolor}{RGB}{0, 51, 102}
\newtheorem{Thm}{Theorem}
\newtheorem{Lem}{Lemma}
\newtheorem{Prop}{Proposition}
\newtheorem{Def}{Definition}

\newtheorem{Rmk}{Remark}
\newtheorem{Exam}{Example}
\newtheorem{Prob}{\textbf{Problem}}
\newtheorem{Cor}{Corollary}

\title{\LARGE \bf
Tunable Input-to-State Safety with Input Constraints
}

\author{Ming Li$^{1}$, Jin Chen$^{2}$, Dimos V. Dimarogonas$^{1}$ 
\thanks{*This work was supported by the ERC Consolidator Grant LEAFHOUND, the Swedish Research Council (VR) and the Knut och Alice Wallenberg Foundation (KAW).}
\thanks{$^{1}$Ming Li and Dimos V. Dimarogonas are with the Division of Decision and Control Systems, EECS, KTH Royal Institute of Technology, 19255 Stockholm, Sweden
(e-mail: ming3@kth.se; dimos@kth.se).}
\thanks{$^{2}$Jin Chen is with the School of Mechanical Engineering, Shanghai Jiao Tong University, Shanghai, China (e-mail: chenjin920414@sjtu.edu.cn).%
}}

\begin{document}

\maketitle
\thispagestyle{empty}
\pagestyle{empty}

\begin{abstract}
Tunable input-to-state safety (TISSf) generalizes the input-to-state safety (ISSf) framework by incorporating a tuning function that regulates safety conservatism while preserving robustness against perturbations. Despite its flexibility, the TISSf tuning function is often designed without explicitly incorporating actuator limits, which can lead to incompatibility with input constraints. To address this gap, this paper proposes a framework that integrates general compact input constraints into tuning function synthesis. Leveraging a geometric perspective, we characterize the TISSf condition as a state-dependent half-space constraint and derive a verifiable certificate for input compatibility using support functions. This characterization transforms the compatibility requirement into a design constraint on the tuning function, yielding a prescriptive lower bound that defines an admissible family of tunings under input constraints. These results are specialized to norm-bounded, polyhedral, and box constraints, yielding tractable control design conditions. We show that these conditions, combined with tuning function monotonicity, guarantee input compatibility and recursive feasibility of the resulting quadratic program (QP)-based safety filter. Furthermore, an offline parameter selection procedure using a covering-based sampling strategy ensures compatibility across the entire safe set via a linear program (LP). A connected cruise control (CCC) application demonstrates robust safety under TISSf while enforcing input constraints by design.
\end{abstract}
\vspace{-6pt}
\section{Introduction}

In recent years, control barrier functions (CBFs) have attracted significant attention in safety-critical control due to their ability to handle nonlinear systems and manage high-relative-degree constraints in real-time~\cite{Zeroing_CBF,wieland2007constructive,xiao2021high,tan2021high}. However, the presence of perturbations (e.g., model uncertainties and disturbances) often undermines the safety guarantees of CBFs in real-world deployments if not appropriately addressed. To overcome this, researchers have extended the CBF framework by incorporating adaptive control~\cite{lopez2020robust,xiao2021adaptive}, learning-based methods~\cite{taylor2020learning,castaneda2021gaussian}, or robust control theory~\cite{jankovic2018robust,kolathaya2018input}.

Among these extensions, input-to-state safety (ISSf)~\cite{romdlony2016new,kolathaya2018input} has emerged as a powerful robustness notion because it quantifies safety degradation under bounded perturbations and integrates naturally with controller synthesis. Tunable input-to-state safety (TISSf)~\cite{alan2021safe}, which generalizes ISSf by introducing a tuning function to adjust the robustness margin, offers additional flexibility to regulate safety conservatism (e.g., to avoid overly cautious maneuvers far from obstacles). From a geometric perspective, this tuning mechanism shifts the hyperplane of the TISSf-CBF-induced half-space, providing a systematic way to regulate constraint conservatism.

However, a fundamental limitation of the TISSf framework in~\cite{alan2021safe} is that tuning functions are often designed solely to regulate safety conservatism, without explicitly enforcing compatibility with input constraints, which are critical in practical control design~\cite{agrawal2021safe,breeden2023robust}. For example, in~\cite{alan2021safe,alan2023control}, tuning parameters are selected via trial-and-error and validated only in specific scenarios, so input constraint satisfaction typically relies on case-by-case tuning or a posteriori saturation. This provides no theoretical guarantee of compatibility between the TISSf-CBF and input constraints. Therefore, it motivates a constructive framework for synthesizing tuning functions with compatibility guarantees, which ensures that TISSf-CBF and input constraints can be satisfied simultaneously by design.

This paper resolves this gap by reformulating the compatibility between TISSf and input constraints as a constructive tuning function design objective. Instead of treating compatibility as a property to be verified a posteriori~\cite{mestres2022optimization}, we provide a rigorous framework to determine the admissible class of tuning functions that inherently ensures compatibility. This approach provides systematic guidance for the synthesis of tuning functions, ensuring that TISSf and input constraints are simultaneously satisfied by construction. 

The main contributions of this paper are summarized as follows. We first establish a compatibility characterization for general compact input sets using support functions, which yields a verifiable lower-bound design condition and, consequently, an explicit admissible family of tuning functions that guarantees input compatibility. Based on this, we specialize the characterization to general constraint families, including norm-bounded, polyhedral, and box sets, to provide directly checkable tuning conditions for different input constraints. We then develop a tractable offline synthesis method for a parameterized tuning function, leveraging a covering-based sampling strategy to facilitate a linear program (LP)-based offline selection procedure. This approach upgrades pointwise compatibility to a formal compatibility guarantee over the entire safe set and ensures the recursive feasibility of the online QP-based safety filter throughout the safe set via the designed tuning family. Finally, we validate the framework through a practical connected cruise control (CCC) application, demonstrating that the proposed method preserves the robust safety benefits of TISSf while satisfying input constraints by design.

\vspace{-10pt}
\section{Preliminaries and Problem Formulation}\label{sec:prelim_problem}

Consider the control-affine system
\begin{equation}\label{Affine_Control_System}
    \dot{\mathbf{x}}=\mathbf{f}(\mathbf{x})+\mathbf{g}(\mathbf{x})\mathbf{u},
\end{equation}
where $\mathbf{x}\in\mathbb{R}^{n}$ is the state, $\mathbf{u}\in\mathbb{R}^{m}$ is the control input, and the vector fields
$\mathbf{f}:\mathbb{R}^{n}\rightarrow\mathbb{R}^{n}$ and $\mathbf{g}:\mathbb{R}^{n}\rightarrow\mathbb{R}^{n\times m}$ are locally Lipschitz.
Given a Lipschitz continuous state-feedback controller $\mathbf{k}:\mathbb{R}^{n}\rightarrow\mathbb{R}^{m}$, the closed-loop system is
\begin{equation}\label{state_feedback_dynamics}
\dot{\mathbf{x}} = \mathbf{f}_{\mathrm{cl}}(\mathbf{x})\triangleq\mathbf{f}(\mathbf{x})+\mathbf{g}(\mathbf{x}) \mathbf{k}(\mathbf{x}).
\end{equation}
Since $\mathbf{f}$, $\mathbf{g}$, and $\mathbf{k}$ are locally Lipschitz, so is $\mathbf{f}_{\mathrm{cl}}$.
Consequently, for any initial condition $\mathbf{x}_{0}\in\mathbb{R}^{n}$, there exists a maximal interval
$I(\mathbf{x}_{0})=[0,t_{\max})$ on which the solution to~\eqref{state_feedback_dynamics} exists and is unique.

\subsection{Control Barrier Functions}
Consider a closed set $\mathcal{C}\subset\mathbb{R}^{n}$ defined as the $0$-superlevel set of a continuously differentiable function
$h:\mathbb{R}^{n}\rightarrow\mathbb{R}$:
\begin{equation}\label{Invariant_Set}
\begin{aligned}
\mathcal{C} & \triangleq\{\mathbf{x}\in \mathbb{R}^{n}: h(\mathbf{x}) \geq 0\}, \\
\partial \mathcal{C} & \triangleq\{\mathbf{x}\in\mathbb{R}^{n}: h(\mathbf{x})=0\}, \\
\operatorname{Int}(\mathcal{C}) & \triangleq\{\mathbf{x}\in \mathbb{R}^{n}: h(\mathbf{x})>0\},
\end{aligned}
\end{equation}
where we assume that $\mathcal{C}$ is nonempty and has no isolated points.
\begin{Def}
(Forward Invariance $\&$ Safety) A set $\mathcal{C}\subset\mathbb{R}^{n}$ is \emph{forward invariant} if for every $\mathbf{x}_{0}\in\mathcal{C}$,
the solution to~\eqref{Affine_Control_System} satisfies $\mathbf{x}(t)\in\mathcal{C}$ for all $t\in I(\mathbf{x}_{0})$.
The system is \emph{safe} on $\mathcal{C}$ if $\mathcal{C}$ is forward invariant.
\end{Def}

\begin{Def}
(Class $\mathcal{K}_{\infty}$ and Extended Class $\mathcal{K}$ Functions) 
A continuous function $\alpha: [0, \infty) \rightarrow [0, \infty)$ is said to belong to class $\mathcal{K}_{\infty}$ if $\alpha(0) = 0$, $\alpha$ is strictly monotonically increasing, and $\lim_{r \to \infty} \alpha(r) = \infty$. Furthermore, a continuous function $\alpha:(-b,a)\rightarrow\mathbb{R}$, with $a,b>0$, is an extended class $\mathcal{K}$ function ($\alpha\in\mathcal{K}_{e}$) if $\alpha(0)=0$, and $\alpha$ is strictly monotonically increasing.
\end{Def}

\begin{Def}\label{CBF_Def}
(CBFs~\cite{Zeroing_CBF}) Let the set $\mathcal{C}$ be defined as the $0$-superlevel set of a continuously differentiable function $h: \mathbb{R}^n \rightarrow \mathbb{R}$ given in \eqref{Invariant_Set} with $\frac{\partial h}{\partial\mathbf{x}}(\mathbf{x})\neq\mathbf{0}$ when $h(\mathbf{x})=0$.
Then $h$ is a CBF for \eqref{Affine_Control_System} if there exists $\alpha\in\mathcal{K}_{e}$ such that, for all $\mathbf{x}\in\mathbb{R}^{n}$, there exists $\mathbf{u}\in\mathbb{R}^{m}$ satisfying
\begin{equation}\label{CBF_Condition}
c(\mathbf{x})+\mathbf{d}(\mathbf{x})\mathbf{u}\geq 0,
\end{equation}
where $c(\mathbf{x})=L_{\mathbf{f}} h(\mathbf{x})+\alpha(h(\mathbf{x}))$ and $\mathbf{d}(\mathbf{x})=L_{\mathbf{g}} h(\mathbf{x})$, $L_{\mathbf{f}} h(\mathbf{x})\triangleq\frac{\partial h(\mathbf{x})}{\partial\mathbf{x}}\mathbf{f}(\mathbf{x})$ and $L_{\mathbf{g}} h(\mathbf{x})\triangleq\frac{\partial h(\mathbf{x})}{\partial\mathbf{x}}\mathbf{g}(\mathbf{x})$ denote the Lie derivatives along $\mathbf{f}$ and $\mathbf{g}$, respectively. 
\end{Def}

Given a CBF $h$ and a corresponding $\alpha$, define the admissible set $\Phi(\mathbf{x})\triangleq\left\{\mathbf{u}\in\mathbb{R}^{m}\mid c(\mathbf{x})+\mathbf{d}(\mathbf{x})\mathbf{u}\geq 0\right\}$.

\begin{Thm}(\cite{Zeroing_CBF})
If $h$ is a CBF for~\eqref{Affine_Control_System} on $\mathcal{C}$, then any Lipschitz continuous controller satisfying
$\mathbf{k}(\mathbf{x})\in\Phi(\mathbf{x})$ for all $\mathbf{x}\in\mathcal{C}$ renders~\eqref{state_feedback_dynamics} safe with respect to $\mathcal{C}$.
\end{Thm}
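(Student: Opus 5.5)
The plan is to show that along any closed-loop solution starting in $\mathcal{C}$, the function $t\mapsto h(\mathbf{x}(t))$ obeys a scalar differential inequality whose comparison solution stays nonnegative. Fix $\mathbf{x}_0\in\mathcal{C}$ and let $\mathbf{x}(t)$, $t\in I(\mathbf{x}_0)$, be the unique solution of \eqref{state_feedback_dynamics}. Set $\eta(t)\triangleq h(\mathbf{x}(t))$. It is continuously differentiable because $h\in C^1$ and $\mathbf{x}(\cdot)$ is $C^1$. Its derivative is
\begin{equation*}
\dot\eta(t)=L_{\mathbf{f}}h(\mathbf{x}(t))+L_{\mathbf{g}}h(\mathbf{x}(t))\mathbf{k}(\mathbf{x}(t)).
\end{equation*}
Since $\mathbf{k}(\mathbf{x})\in\Phi(\mathbf{x})$, this gives $\dot\eta(t)\ge-\alpha(\eta(t))$ as long as $\mathbf{x}(t)\in\mathcal{C}$.

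The subtlety is that the hypothesis only gives this inequality on $\mathcal{C}$, so I will not apply the comparison lemma globally. Instead I argue by contradiction with a first-exit time. Suppose $\eta(t_1)<0$ for some $t_1\in I(\mathbf{x}_0)$. Let $t_0\triangleq\sup\{t\in[0,t_1]:\eta(t)\ge 0\}$. By continuity $\eta(t_0)=0$, and $\eta(t)<0$ on $(t_0,t_1]$. On $[t_0,t_1]$ we have $\mathbf{x}(t)\notin\mathrm{Int}(\mathcal{C})$, and we only know the inequality at $t_0$ itself. So this direct route needs the condition slightly outside $\mathcal{C}$. Definition~\ref{CBF_Def} in fact guarantees the existence of admissible inputs on all of $\mathbb{R}^n$, but the controller is only assumed to lie in $\Phi$ on $\mathcal{C}$. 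This is the main obstacle.

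I would handle it with a boundary tangency (Nagumo-type) argument rather than integrating the inequality. At any $\mathbf{x}\in\partial\mathcal{C}$ we have $h(\mathbf{x})=0$ and $\alpha(0)=0$, so
\begin{equation*}
\frac{\partial h}{\partial\mathbf{x}}(\mathbf{x})\,\mathbf{f}_{\mathrm{cl}}(\mathbf{x})\ge 0 .
\end{equation*}
Because $\frac{\partial h}{\partial \mathbf{x}}\neq\mathbf{0}$ on $\partial\mathcal{C}$, $h$ is a regular defining function there, and the Bouligand tangent cone to $\mathcal{C}$ at $\mathbf{x}$ is $\{\mathbf{v}:\frac{\partial h}{\partial\mathbf{x}}(\mathbf{x})\mathbf{v}\ge 0\}$. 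Hence $\mathbf{f}_{\mathrm{cl}}(\mathbf{x})$ lies in this cone at every boundary point, and trivially at interior points. The closed-loop vector field is locally Lipschitz, so solutions are unique. Nagumo's theorem (Brezis' version for closed sets with unique solutions) then yields forward invariance of the closed set $\mathcal{C}$ on $I(\mathbf{x}_0)$, which is exactly safety.

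If a self-contained argument is preferred over citing Nagumo, I would use the following alternative. Extend $\mathbf{k}$ to a Lipschitz controller on a neighborhood of $\mathcal{C}$. Consider the perturbed condition with $\alpha$ replaced by $\alpha+\varepsilon$, which gives strict inequality $\dot\eta>0$ at $\eta=0$. Conclude that $\eta$ cannot cross zero for the perturbed system, and then let $\varepsilon\to0$ using continuous dependence of solutions on the vector field. Either way, the key steps in order are: compute $\dot\eta$; obtain $\dot h\ge0$ on $\partial\mathcal{C}$ from $\mathbf{k}\in\Phi$ and $\alpha(0)=0$; use regularity of $h$ to identify the tangent cone; and invoke the viability/Nagumo argument with uniqueness from local Lipschitzness.
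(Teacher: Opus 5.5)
Your main argument is correct and is essentially the standard proof of this cited result: the paper gives no proof and defers to \cite{Zeroing_CBF}, where forward invariance follows from $\dot h\ge-\alpha(0)=0$ on $\partial\mathcal{C}$ together with Nagumo's theorem, exactly as you do, and you were right to avoid the comparison lemma because $\mathbf{k}(\mathbf{x})\in\Phi(\mathbf{x})$ is only assumed on $\mathcal{C}$. Your optional ``self-contained'' alternative does not work as written: $\alpha$ does not enter the closed-loop dynamics, so replacing it by $\alpha+\varepsilon$ only weakens the certificate (to $\dot\eta\ge-\varepsilon$ at $\eta=0$) and cannot give $\dot\eta>0$. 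What you would need instead is to perturb the vector field itself, e.g.\ $\mathbf{f}_{\mathrm{cl}}+\varepsilon\nabla h^{\top}$, which is strictly inward on $\partial\mathcal{C}$ since $\nabla h\neq\mathbf{0}$ there, and then let $\varepsilon\to0$ using compactness/continuous dependence and uniqueness of the nominal solution; no extension of $\mathbf{k}$ is needed, as it is already defined on $\mathbb{R}^n$.
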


\subsection{Tunable Input-to-State Safety}
Consider the dynamical system:
\begin{equation}\label{Affine_Control_Disturbed}
\dot{\mathbf{x}} = \mathbf{f}(\mathbf{x}) + \mathbf{g}(\mathbf{x})\big(\mathbf{u} + \bm{\omega}(t)\big),
\end{equation}
where $\bm{\omega}(t)$ denotes an external perturbation satisfying $\|\bm{\omega}\|_{\infty}\leq\delta$ for a known constant $\delta\geq 0$. As shown in~\cite{kolathaya2018input,alan2021safe}, a state-feedback law $\mathbf{k}(\mathbf{x}) \in \Phi(\mathbf{x})$ designed for the nominal system generally fails to guarantee safety in the presence of such perturbations.

Define the tightened set
\begin{equation}\label{Invariant_Set_Disturbed}
\begin{aligned}
\mathcal{C}_{\bm{\omega}} & \triangleq\left\{\mathbf{x} \in \mathbb{R}^{n}: h(\mathbf{x})+\zeta(h(\mathbf{x}),\delta) \geq 0\right\}, \\
\partial \mathcal{C}_{\bm{\omega}} & \triangleq\left\{\mathbf{x} \in \mathbb{R}^{n}: h(\mathbf{x})+\zeta(h(\mathbf{x}),\delta) = 0\right\}, \\
\operatorname{Int}(\mathcal{C}_{\bm{\omega}}) & \triangleq\left\{\mathbf{x} \in \mathbb{R}^{n}: h(\mathbf{x})+\zeta(h(\mathbf{x}),\delta)>0\right\},
\end{aligned}
\end{equation}
where $\zeta(\cdot,\delta)$ is continuously differentiable in its first argument and belongs to $\mathcal{K}_{\infty}$ in its second argument.

\begin{Def}
(TISSf-CBFs~\cite{alan2021safe}) Let the set $\mathcal{C}$ be defined as the $0$-superlevel set of a continuously differentiable function $h: \mathbb{R}^n \rightarrow \mathbb{R}$ given in \eqref{Invariant_Set} with $\frac{\partial h}{\partial \mathbf{x}}(\mathbf{x}) \neq \mathbf{0}$ whenever $h(\mathbf{x}) = 0$.
Then $h$ is a TISSf-CBF for \eqref{Affine_Control_System} on $\mathcal{C}$ with a continuously differentiable $\varepsilon:\mathbb{R}\rightarrow\mathbb{R}_{>0}$ if there exists $\alpha\in\mathcal{K}_{e}$ such that, for all $\mathbf{x}\in\mathbb{R}^{n}$, there exists $\mathbf{u}\in\mathbb{R}^{m}$ satisfying
\begin{equation}\label{HOCBF_ISSf}
c(\mathbf{x})+\mathbf{d}(\mathbf{x})\mathbf{u}\geq \frac{\left\|\mathbf{d}(\mathbf{x})\right\|_{2}^{2}}{\varepsilon(h(\mathbf{x}))}.
\end{equation}
\end{Def}

Given a tuning function $\varepsilon:\mathbb{R}\rightarrow\mathbb{R}_{>0}$, define the pointwise admissible set
\begin{equation}
\Pi_{\varepsilon}(\mathbf{x})\triangleq\left\{\mathbf{u}\in\mathbb{R}^{m}\mid c(\mathbf{x})+\mathbf{d}(\mathbf{x})\mathbf{u}\geq \frac{\left\|\mathbf{d}(\mathbf{x})\right\|_{2}^{2}}{\varepsilon(h(\mathbf{x}))}\right\}.
\end{equation}

As illustrated in Fig.~\ref{Geometric_inllustration} (a), for a fixed state $\mathbf{x}\in\mathbb{R}^{n}$, the TISSf-CBF condition \eqref{HOCBF_ISSf} defines a half-space $\Pi_{\varepsilon}(\mathbf{x})$ in the input space. In \eqref{HOCBF_ISSf}, the term $\|\mathbf{d}(\mathbf{x})\|_{2}^{2}/\varepsilon(h(\mathbf{x}))$ grows as the tuning function $\varepsilon(h(\mathbf{x}))$ decreases. This increases conservatism and shrinks the admissible control set $\Pi_{\varepsilon}(\mathbf{x})$ (orange dashed line) relative to the baseline (blue solid line). Conversely, increasing $\varepsilon(h(\mathbf{x}))$ makes \eqref{HOCBF_ISSf} less conservative and expands $\Pi_{\varepsilon}(\mathbf{x})$ (green dashed line). However, to ensure the TISSf property as established in \cite{alan2021safe}, the tuning function $\varepsilon(h(\mathbf{x}))$ must also satisfy the monotonicity condition \eqref{TISSf_Function_Limit}, as formalized in the following theorem.
\begin{figure}[tp]
 \centering
 \includegraphics[width=0.85\columnwidth]{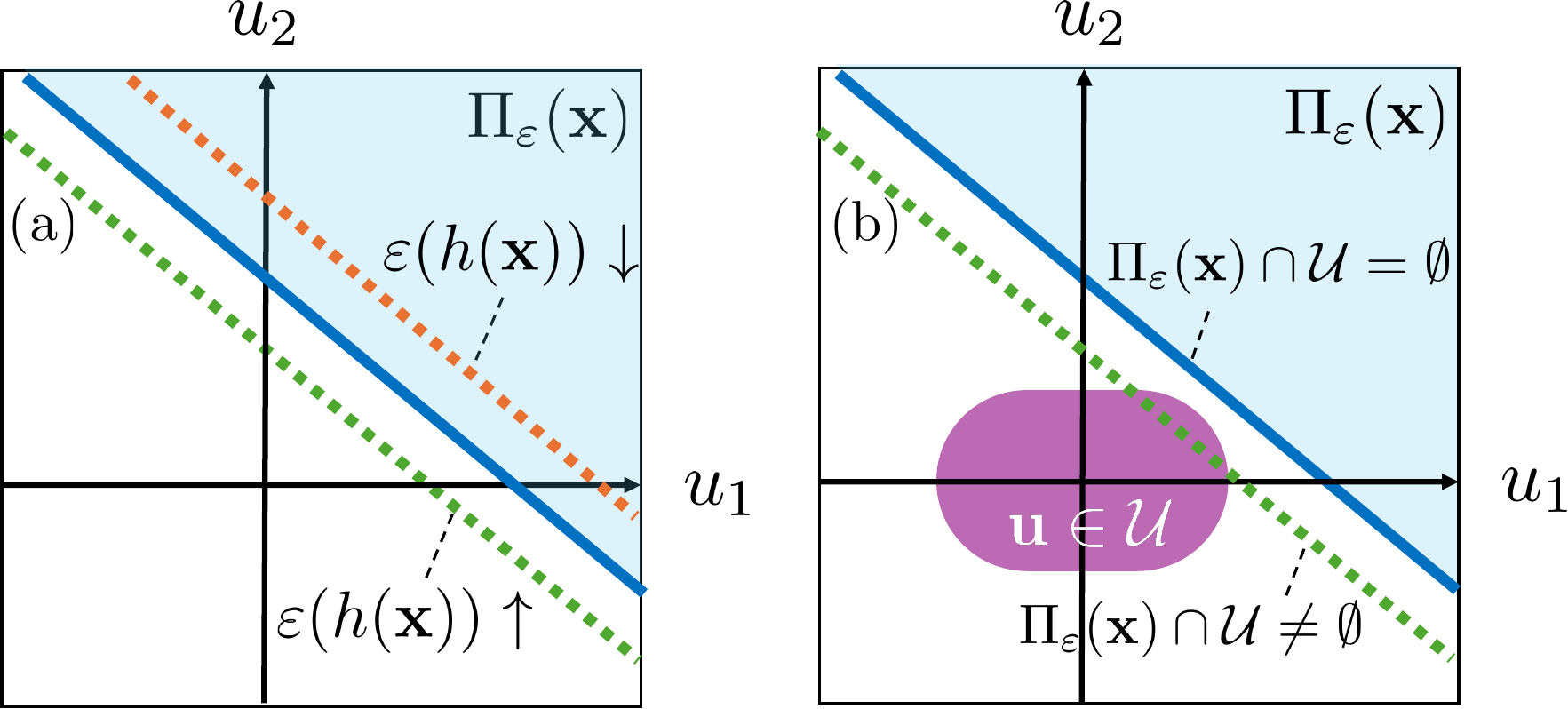}
 \caption{Geometric interpretation of the TISSf-CBF condition and its compatibility with input constraints. 
 (a) Geometric effect of $\varepsilon(h(\mathbf{x}))$: varying its value shifts the half-space boundary relative to a baseline (blue line), where increasing or decreasing $\varepsilon(h(\mathbf{x}))$ yields a displaced hyperplane (green and orange dashed lines, respectively). 
 (b) Design impact on compatibility: an improperly small $\varepsilon(h(\mathbf{x}))$ results in $\Pi_{\varepsilon}(\mathbf{x}) \cap \mathcal{U} = \emptyset$ (blue line), whereas a proper choice of $\varepsilon(h(\mathbf{x}))$ ensures $\Pi_{\varepsilon}(\mathbf{x}) \cap \mathcal{U} \neq \emptyset$ (green dashed line).}
 \label{Geometric_inllustration}
\end{figure}

\begin{Thm}\label{ISSf_Theory}
(\cite{alan2021safe}) Let the set $\mathcal{C}$ be defined as the $0$-superlevel set of a continuously differentiable function $h: \mathbb{R}^n \rightarrow \mathbb{R}$ given in \eqref{Invariant_Set} with $\frac{\partial h}{\partial \mathbf{x}}(\mathbf{x}) \neq \mathbf{0}$ whenever $h(\mathbf{x}) = 0$ and $\delta\geq 0$.
Assume that $\varepsilon:\mathbb{R}\rightarrow\mathbb{R}_{> 0}$ is continuously differentiable and satisfies
\begin{equation}\label{TISSf_Function_Limit}
    \frac{\mathrm{d}\varepsilon(h(\mathbf{x}))}{\mathrm{d}h(\mathbf{x})}\geq 0.
\end{equation}
If $h$ is a TISSf-CBF for~\eqref{Affine_Control_System} on $\mathcal{C}$, then any Lipschitz continuous controller with $\mathbf{k}(\mathbf{x})\in\Pi_{\varepsilon}(\mathbf{x})$ for all $\mathbf{x}\in\mathcal{C}$
renders~\eqref{Affine_Control_Disturbed} safe with respect to $\mathcal{C}_{\bm{\omega}}$, where in~\eqref{Invariant_Set_Disturbed} we have
$\zeta(h(\mathbf{x}),\delta)=-\alpha^{-1}(-\frac{\varepsilon (h(\mathbf{x}))\delta^{2}}{4})$.
\end{Thm}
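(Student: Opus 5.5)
The plan is to show that the function $h_{\bm{\omega}}(\mathbf{x})\triangleq h(\mathbf{x})+\zeta(h(\mathbf{x}),\delta)$ is a valid (nominal) barrier certificate for the perturbed closed loop~\eqref{Affine_Control_Disturbed}, with $\mathbf{u}=\mathbf{k}(\mathbf{x})$. We would then conclude forward invariance of $\mathcal{C}_{\bm{\omega}}$ through a comparison/Nagumo-type argument, as in the proof of the CBF theorem of~\cite{Zeroing_CBF}.

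\textbf{Step 1: bound $\dot h$ along solutions.} Along solutions of~\eqref{Affine_Control_Disturbed}, $\dot h = L_{\mathbf{f}}h + \mathbf{d}(\mathbf{x})(\mathbf{k}(\mathbf{x})+\bm{\omega})$. Using $\mathbf{k}(\mathbf{x})\in\Pi_{\varepsilon}(\mathbf{x})$,
\begin{equation*}
\dot h \geq -\alpha(h) + \frac{\|\mathbf{d}\|_2^2}{\varepsilon(h)} - \|\mathbf{d}\|_2\delta .
\end{equation*}
Minimizing the quadratic $s^2/\varepsilon - s\delta$ over $s=\|\mathbf{d}\|_2\geq0$ gives the value $-\varepsilon(h)\delta^2/4$. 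Hence
\begin{equation*}
\dot h \geq -\alpha(h) - \frac{\varepsilon(h)\delta^2}{4}.
\end{equation*}

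\textbf{Step 2: show $\dot h_{\bm{\omega}}\geq 0$ on the boundary.} Write $\zeta(h,\delta) = -\alpha^{-1}\big(-\varepsilon(h)\delta^2/4\big)$. On $\partial\mathcal{C}_{\bm{\omega}}$ we have $h = -\zeta$, i.e. $h = \alpha^{-1}(-\varepsilon(h)\delta^2/4)$. Since $\alpha^{-1}\in\mathcal{K}_e$ (defined on the appropriate range, which we assume), this gives $\alpha(h)=-\varepsilon(h)\delta^2/4$, so the right-hand side of the Step 1 bound vanishes and $\dot h\ge 0$ there.

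Next compute
\begin{equation*}
\dot h_{\bm{\omega}} = \Big(1+\frac{\partial\zeta}{\partial h}\Big)\dot h,
\end{equation*}
where
\begin{equation*}
\frac{\partial\zeta}{\partial h} = (\alpha^{-1})'\big(-\varepsilon\delta^2/4\big)\,\varepsilon'(h)\,\frac{\delta^2}{4} \geq 0 .
\end{equation*}
The sign follows from the monotonicity condition~\eqref{TISSf_Function_Limit} together with $\alpha^{-1}$ being increasing, assuming $\alpha$ is differentiable; otherwise we argue with one-sided monotonicity. Therefore $1+\partial\zeta/\partial h\geq 1>0$, and $\dot h_{\bm{\omega}}\geq 0$ whenever $h_{\bm{\omega}}(\mathbf{x})=0$.

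\textbf{Step 3: invariance.} Since $1+\zeta(\cdot,\delta)$ composed with $h$ is strictly increasing in $h$, $h_{\bm{\omega}}=0$ corresponds to a level set of $h$ with nonvanishing gradient (the gradient of $h$ is nonzero there, or we handle this via strict monotonicity). Nagumo's theorem then yields forward invariance of $\mathcal{C}_{\bm{\omega}}$. A cleaner route is the comparison lemma. The map $\phi(h)=h+\zeta(h,\delta)$ is increasing, and $\dot h \geq -\alpha(h)-\varepsilon(h)\delta^2/4 \triangleq -\beta(h)$, where $\beta$ is increasing (by~\eqref{TISSf_Function_Limit}) and has root $h^\star=-\zeta$. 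The scalar system $\dot y=-\beta(y)$ keeps $y\geq h^\star$, so $h(t)\geq h^\star$ for all $t$, i.e. $\mathbf{x}(t)\in\mathcal{C}_{\bm{\omega}}$.

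\textbf{Main obstacle.} The delicate step is handling the state-dependent $\varepsilon(h)$ inside $\zeta$. Monotonicity~\eqref{TISSf_Function_Limit} is exactly what makes $\beta$ increasing and $\phi$ invertible, so the tightened boundary is a single level set of $h$. We would also need to take care of the domain of $\alpha^{-1}$ and of the uniqueness conditions required for the comparison lemma, for instance by using a local Lipschitz assumption on $\beta$ or a Nagumo-style argument.
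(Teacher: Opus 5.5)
The paper does not prove this result; it is quoted from \cite{alan2021safe}. Your outline is the standard argument: complete the square to get $\dot h\ge-\alpha(h)-\varepsilon(h)\delta^2/4$, then use the monotonicity of $\varepsilon$ to show the inflated set cannot be exited. There is, however, a genuine gap in how you use Step~1. That inequality holds only at states where $\mathbf{k}(\mathbf{x})\in\Pi_{\varepsilon}(\mathbf{x})$, and the statement grants this only for $\mathbf{x}\in\mathcal{C}$.

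For $\delta>0$ you have $\zeta(h,\delta)=-\alpha^{-1}(-\varepsilon(h)\delta^2/4)>0$ and $\beta(0)=\varepsilon(0)\delta^2/4>0$. Hence $h^\star<0$ and $\mathcal{C}\subset\operatorname{Int}(\mathcal{C}_{\bm{\omega}})$. So $\partial\mathcal{C}_{\bm{\omega}}$ (used in Step~2) and the region $\{h<h^\star\}$ (used in Step~3) lie entirely outside $\mathcal{C}$, where nothing is known about $\mathbf{k}$.

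This cannot be patched under the hypothesis as written, because the claim is then false. Take $\dot x=u+\omega$, $h(x)=x$, $\alpha(r)=r$, $\varepsilon\equiv 1$. Then $\Pi_{\varepsilon}(x)=\{u:\,u\ge 1-x\}$ and $h$ is a TISSf-CBF. The Lipschitz controller $k(x)=1-\max\{x,0\}$ satisfies $k(x)\in\Pi_{\varepsilon}(x)$ on $\mathcal{C}=[0,\infty)$. Now let $\omega\equiv-2$, so $\delta=2$, $\zeta=1$ and $\mathcal{C}_{\bm{\omega}}=[-1,\infty)$. The solution from $x_0=0$ is $x(t)=-t$, which leaves $\mathcal{C}_{\bm{\omega}}$ after $t=1$.

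What your argument does prove is the result with the controller condition imposed on all of $\mathbb{R}^n$, as in \cite{alan2021safe}, or at least on an open set containing $\mathcal{C}_{\bm{\omega}}$. The TISSf-CBF definition already guarantees $\Pi_{\varepsilon}(\mathbf{x})\neq\emptyset$ for every $\mathbf{x}\in\mathbb{R}^n$, so this is achievable. You should state this strengthening explicitly.

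With it, your comparison route in Step~3 is complete, and you can drop Step~2 along with the extra assumptions it needs. Those are differentiability of $\alpha^{-1}$, and $\nabla h\neq\mathbf{0}$ on $\{h=h^\star\}$, which is not given since regularity is only assumed on $\{h=0\}$.

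You also do not need uniqueness for the comparison ODE. Since $\beta$ is strictly increasing, $\beta(h)<0$ for $h<h^\star$. Suppose $h(\mathbf{x}(t_1))<h^\star$, and let $t_0$ be the last time before $t_1$ with $h(\mathbf{x}(t_0))=h^\star$. Then $\dot h\ge-\beta(h)>0$ on $(t_0,t_1]$, which contradicts $h(\mathbf{x}(t_1))<h(\mathbf{x}(t_0))$.

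The equivalence $h+\zeta(h,\delta)\ge 0\iff\beta(h)\ge 0\iff h\ge h^\star$ then gives forward invariance of $\mathcal{C}_{\bm{\omega}}$. This requires $-\varepsilon(h)\delta^2/4$ to lie in the range of $\alpha$, which is already implicit in the definition of $\zeta$.
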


\subsection{Problem Formulation}
The TISSf framework in Theorem~\ref{ISSf_Theory} provides a tuning function $\varepsilon(h(\mathbf{x}))$ as a design degree of freedom, but it does not account for input constraints. To address this limitation, we consider a general input constraint
\begin{equation}\label{general_constraint}
\mathbf{u}\in\mathcal{U},
\end{equation}
where $\mathcal{U}\subset\mathbb{R}^{m}$ is nonempty, compact, and convex. As shown in Fig.~\ref{Geometric_inllustration} (b), the TISSf-CBF condition \eqref{HOCBF_ISSf} defines a half-space $\Pi_{\varepsilon}(\mathbf{x})$ in the input space whose boundary shifts based on the value of $\varepsilon(h(\mathbf{x}))$. For a fixed state $\mathbf{x}$, an improperly small $\varepsilon(h(\mathbf{x}))$ results in an empty intersection $\Pi_{\varepsilon}(\mathbf{x})\cap\mathcal{U} = \emptyset$ as represented by the blue line, leading to a loss of compatibility. Increasing the value of $\varepsilon(h(\mathbf{x}))$ translates this boundary toward the green dashed line, expanding the half-space to ensure $\Pi_{\varepsilon}(\mathbf{x})\cap\mathcal{U} \neq \emptyset$. This reveals that incorporating input constraints into the control design does not merely add a posterior compatibility check but imposes a fundamental restriction on the admissible family of tuning functions. This insight motivates characterizing the set of tuning functions that ensure $\Pi_{\varepsilon}(\mathbf{x})\cap\mathcal{U}\neq\emptyset$ for all $\mathbf{x}\in\mathcal{C}$ and deriving corresponding conditions for common constraint sets, which is further illustrated by the following example adapted from~\cite{alan2021safe}.
\begin{Exam}\label{Typical_Example}
Consider the system $\dot{x}_{1}=-x_{2}, \dot{x}_{2}=u+\omega(t)$, where $\mathbf{x}=[x_{1},x_{2}]^{\top}$, $u\in\mathbb{R}$, and $\omega(t)=\mu\sin t$ with $\mu=3$.
Following~\cite{alan2021safe}, choose the TISSf-CBF candidate $h(\mathbf{x})=x_{1}-x_{2}$ and the nominal feedback $k(\mathbf{x})=x_{1}-2x_{2}-1$.
We consider the TISSf-CBF-based control law used in~\cite{alan2021safe}, $ u = k(\mathbf{x})+d(\mathbf{x})/\varepsilon(h(\mathbf{x}))$,
where $d(\mathbf{x})=L_{\mathbf{g}}h(\mathbf{x})=-1$ and $\varepsilon(h(\mathbf{x}))=\epsilon_{0}e^{\lambda h(\mathbf{x})}$.
In the simulations, we fix $\lambda=0.2$ and vary $\ln\epsilon_{0}\in\{-4, -3.5, -2, 0, 1\}$.
The input bound is $|u| \le 15$, which can be characterized as either a box constraint or a norm-bounded constraint.
\end{Exam}
Fig.~\ref{Input_Behavior} illustrates that certain tuning choices may satisfy the TISSf condition for the perturbed system~\eqref{Affine_Control_Disturbed} while violating the prescribed input bounds. Specifically, different parameter choices in $\varepsilon(h(\mathbf{x}))=\epsilon_0 e^{\lambda h(\mathbf{x})}$ lead to different closed-loop trajectories (Fig.~\ref{Input_Behavior} (a)).
Although the nominal constraint $h(\mathbf{x})\ge 0$ may be violated (Fig.~\ref{Input_Behavior} (b)), the tightened inequality $h(\mathbf{x})+\zeta(h(\mathbf{x}),\delta)\ge 0$ can remain satisfied (Fig.~\ref{Input_Behavior} (c)), which corresponds to safety with respect to the tightened set $\mathcal{C}_{\bm{\omega}}$ in Theorem~\ref{ISSf_Theory}.
However, the resulting control input may exceed the input constraint with certain choices of parameter pairs (Fig.~\ref{Input_Behavior}(d)).
This reveals a fundamental design question: \textit{In the presence of input constraints, how can one construct a tuning function $\varepsilon(\cdot)$ that preserves TISSf while ensuring $\Pi_{\varepsilon}(\mathbf{x}) \cap \mathcal{U} \neq \emptyset$ for all $\mathbf{x} \in \mathcal{C}$?} Equivalently, we seek parameterizations $(\ln \epsilon_0, \lambda)$ that maintain both TISSf and input compatibility over the entire safe set $\mathcal{C}$. These considerations lead to the following problem formulation.
\begin{figure}[tp]
 \centering
 \includegraphics[width=1\columnwidth]{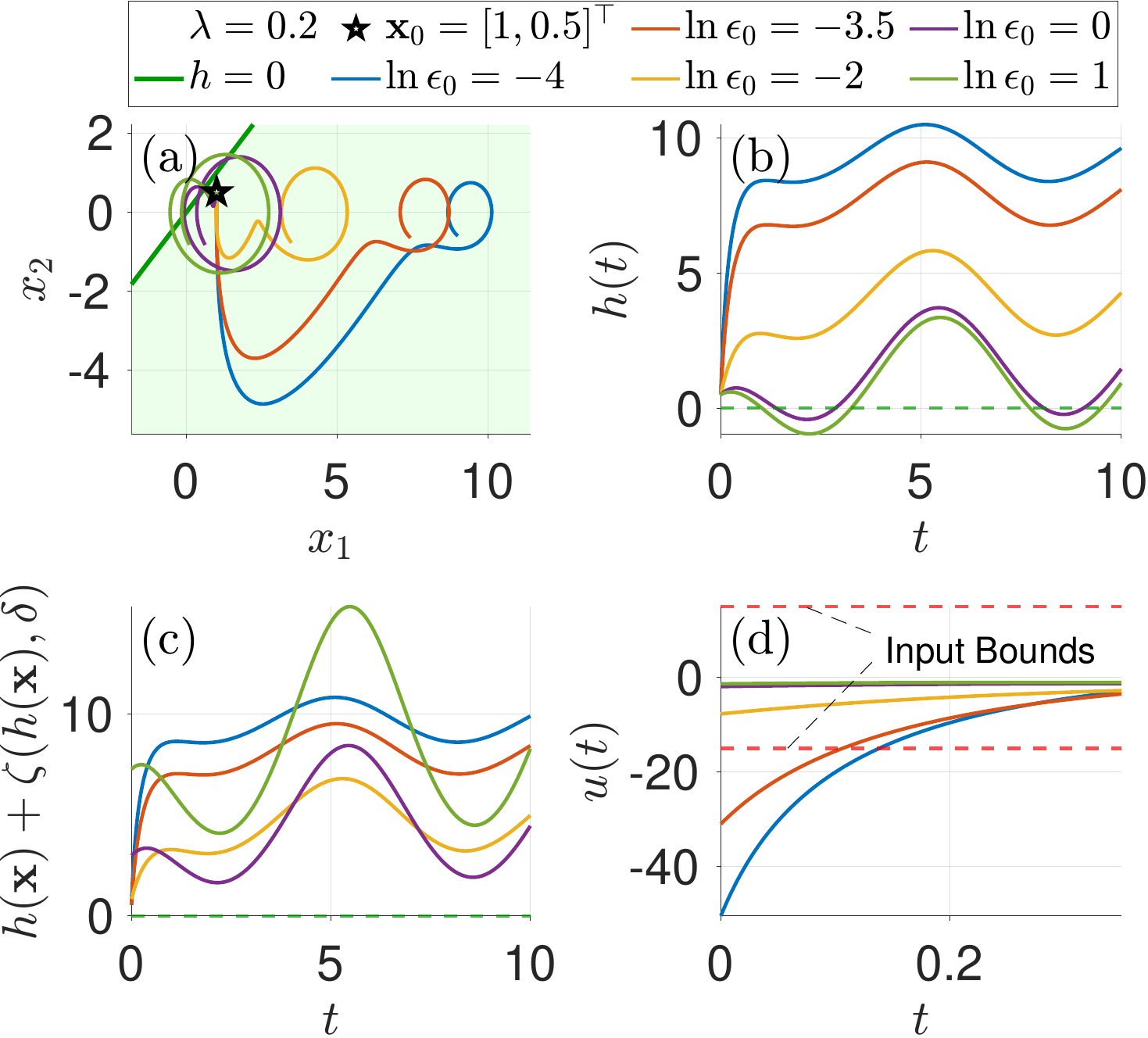}
 \caption{Motivating example showing that directly applying a TISSf-CBF-based controller with different parameters of the exponential tuning function $\varepsilon(h)=\epsilon_0 e^{\lambda h}$ may violate input constraints:
(a) state trajectories in the $(x_1,x_2)$ plane for different parameter choices;
(b) evolution of the barrier function $h(\mathbf x)$;
(c) evolution of the tightened quantity $h(\mathbf x)+\zeta(h(\mathbf x),\delta)$;
(d) control input $u(t)$ over a short time window, highlighting possible violations of the input bound $|u|\le 15$.}
 \label{Input_Behavior}
\end{figure}
\begin{Prob}\label{Problem_Formulation}
Consider the perturbed affine system \eqref{Affine_Control_Disturbed} with a nonempty compact admissible input set $\mathcal U\subset\mathbb{R}^{m}$.
Let $\mathcal C=\{\mathbf x\in\mathbb{R}^n:\,h(\mathbf x)\ge 0\}$ and assume that $h$ is a TISSf-CBF for \eqref{Affine_Control_System} on $\mathcal C$.
The objective is to: 
(i) characterize the set of admissible tuning functions
\begin{equation}
\mathcal{E}:=\Big\{\varepsilon:\mathbb{R}\rightarrow\mathbb{R}_{>0}\ \Big|\ \Pi_{\varepsilon}(\mathbf x)\cap \mathcal U \neq \emptyset,\ \forall \mathbf x\in\mathcal C \Big\};
\end{equation}
(ii) derive explicit tuning conditions for commonly used input sets $\mathcal U$; and 
(iii) establish parameter design conditions for structured choices of $\varepsilon(\cdot)$ so that $\varepsilon(\cdot)$ remains a design degree of freedom under the constraint $\mathbf u\in\mathcal U$.
\end{Prob}

\section{TISSf with Input Constraints}\label{TISSf_with_Input_Limits}
In this section, we address Problem~\ref{Problem_Formulation} by deriving an exact support function characterization of tuning compatibility under a general compact input constraint. We then discuss several common input constraints, including norm-bounded, polyhedral, and box input constraints, and develop a tractable parametric tuning method based on exponential parameterization together with a covering-based sampling strategy that leads to an LP formulation for offline parameter design.

\subsection{Characterize the Set of Tuning Functions }\label{Main_A}
In this subsection, we consider the general compact input constraint $u \in \mathcal{U}$ defined in \eqref{general_constraint}. By transforming the compatibility requirement $\Pi_{\varepsilon}(\mathbf{x}) \cap \mathcal{U} \neq \emptyset$ into an explicit lower bound on $\varepsilon(\cdot)$, we treat input compatibility as a primary design objective. More specifically, rather than checking compatibility a posteriori, we proactively select $\varepsilon(\cdot)$ to shift the TISSf-CBF half-space and guarantee the intersection condition by design. This approach relies on an exact scalar characterization of the compatibility condition via a support function, yielding a verifiable, closed-form bound on $\varepsilon(h(\mathbf{x}))$.
\begin{Lem}\label{lem:general_compatibility}
For a given $\mathbf{x}\in\mathbb{R}^n$, assume $c(\mathbf{x})+\sigma_{\mathcal{U}}(\mathbf{d}(\mathbf{x})) > 0$, where $\sigma_{\mathcal{U}}(\mathbf{d}(\mathbf{x})) \triangleq \max_{\mathbf{u}\in\mathcal{U}} \mathbf{d}(\mathbf{x})\mathbf{u}$ denotes the support function of $\mathcal{U}$. Then, the TISSf-CBF constraint \eqref{HOCBF_ISSf} is compatible with the input constraint $\mathcal{U}$, i.e., $\Pi_{\varepsilon}(\mathbf{x})\cap\mathcal{U}\neq\emptyset$, if and only if 
\begin{equation}\label{eq:general_eps_bound}
\varepsilon(h(\mathbf{x})) \geq \frac{\|\mathbf{d}(\mathbf{x})\|_{2}^2}{c(\mathbf{x})+\sigma_{\mathcal{U}}(\mathbf{d}(\mathbf{x}))}.
\end{equation}
\end{Lem}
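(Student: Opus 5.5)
The plan is to reduce the intersection condition $\Pi_{\varepsilon}(\mathbf{x})\cap\mathcal{U}\neq\emptyset$ to a scalar inequality via the support function, and then solve that inequality for $\varepsilon(h(\mathbf{x}))$. Fix $\mathbf{x}$ and abbreviate $c=c(\mathbf{x})$, $\mathbf{d}=\mathbf{d}(\mathbf{x})$, $\varepsilon=\varepsilon(h(\mathbf{x}))>0$.

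\textbf{Step 1 (scalar reformulation).} The set $\Pi_{\varepsilon}(\mathbf{x})$ is the half-space $\{\mathbf{u}: \mathbf{d}\mathbf{u}\ge \|\mathbf{d}\|_2^2/\varepsilon - c\}$. It meets $\mathcal{U}$ if and only if some $\mathbf{u}\in\mathcal{U}$ satisfies $\mathbf{d}\mathbf{u}\ge \|\mathbf{d}\|_2^2/\varepsilon - c$. This holds if and only if $\sup_{\mathbf{u}\in\mathcal{U}}\mathbf{d}\mathbf{u}\ge \|\mathbf{d}\|_2^2/\varepsilon - c$.
\begin{itemize}
\item The ``only if'' direction is immediate, since any admissible $\mathbf{u}$ gives $\sigma_{\mathcal{U}}(\mathbf{d})\ge\mathbf{d}\mathbf{u}$.
\item For the ``if'' direction I will use compactness and nonemptiness of $\mathcal{U}$ together with continuity of the linear map $\mathbf{u}\mapsto\mathbf{d}\mathbf{u}$. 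By Weierstrass the supremum is attained at some $\mathbf{u}^\star\in\mathcal{U}$, and this $\mathbf{u}^\star$ then lies in $\Pi_{\varepsilon}(\mathbf{x})$.
\end{itemize}
This attainment is the only place where an assumption on $\mathcal{U}$ is genuinely used. Convexity is not needed. Without compactness one would only get the strict-inequality version.

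Combining the two directions, compatibility is equivalent to
\[
c+\sigma_{\mathcal{U}}(\mathbf{d})\ge \frac{\|\mathbf{d}\|_2^2}{\varepsilon}.
\]

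\textbf{Step 2 (solve for $\varepsilon$).} By hypothesis $c+\sigma_{\mathcal{U}}(\mathbf{d})>0$, and $\varepsilon>0$. Multiplying the inequality by $\varepsilon/(c+\sigma_{\mathcal{U}}(\mathbf{d}))>0$ preserves its direction and yields \eqref{eq:general_eps_bound}. Since this multiplier is positive, each step is reversible, so the equivalence holds in both directions.

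The degenerate case $\mathbf{d}=\mathbf{0}$ needs no separate treatment. There $\sigma_{\mathcal{U}}(\mathbf{0})=0$, so the hypothesis reads $c>0$, and both sides reduce to trivially true statements: $\Pi_{\varepsilon}(\mathbf{x})=\mathbb{R}^m$, and the bound becomes $\varepsilon\ge 0$.

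\textbf{Main obstacle.} There is no deep step here. The points that require care are the attainment of the supremum, which relies on compactness, and the sign of the divisor $c+\sigma_{\mathcal{U}}(\mathbf{d})$. The positivity hypothesis is exactly what makes the division legitimate. I would also remark that if $c+\sigma_{\mathcal{U}}(\mathbf{d})\le 0$ and $\mathbf{d}\neq\mathbf{0}$, then no positive $\varepsilon$ gives compatibility, which explains why the lemma assumes positivity.
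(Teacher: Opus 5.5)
Your proposal is correct and follows essentially the same route as the paper: rewrite compatibility as $\|\mathbf{d}(\mathbf{x})\|_2^2/\varepsilon(h(\mathbf{x})) - c(\mathbf{x}) \le \sigma_{\mathcal{U}}(\mathbf{d}(\mathbf{x}))$, use compactness and nonemptiness of $\mathcal{U}$ to attain the maximum for sufficiency, and divide by the positive quantity $c(\mathbf{x})+\sigma_{\mathcal{U}}(\mathbf{d}(\mathbf{x}))$. Your additional observations are correct but go slightly beyond what the paper states: that the case $\mathbf{d}(\mathbf{x})=\mathbf{0}$ is harmless, that convexity is never used, and that positivity is necessary when $\mathbf{d}(\mathbf{x})\neq\mathbf{0}$.
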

\begin{proof}
As shown in Fig.~\ref{Geometric_inllustration} (b), the TISSf condition \eqref{HOCBF_ISSf} defines a state-dependent half-space $\Pi_{\varepsilon}(\mathbf{x})$ in the input space, which can be rearranged as:
\begin{equation}\label{eq:halfspace_form}
    \frac{\|\mathbf{d}(\mathbf{x})\|_{2}^{2}}{\varepsilon(h(\mathbf{x}))} - c(\mathbf{x}) \leq \mathbf{d}(\mathbf{x})\mathbf{u}.
\end{equation}
Necessity follows from the fact that compatibility requires the existence of at least one $\mathbf{u} \in \mathcal{U}$ satisfying \eqref{eq:halfspace_form}, which implies:
\begin{equation}\label{itermidiarte}
    \frac{\|\mathbf{d}(\mathbf{x})\|_{2}^{2}}{\varepsilon(h(\mathbf{x}))} - c(\mathbf{x}) \leq \max_{\mathbf{u} \in \mathcal{U}} \mathbf{d}(\mathbf{x}) \mathbf{u} = \sigma_{\mathcal{U}}(\mathbf{d}(\mathbf{x})).
\end{equation}
Under the assumption $c(\mathbf{x}) + \sigma_{\mathcal{U}}(\mathbf{d}(\mathbf{x})) > 0$, Eq.~\eqref{itermidiarte} is equivalent to \eqref{eq:general_eps_bound}. For sufficiency, assume \eqref{eq:general_eps_bound} holds. Since $\mathcal{U}$ is non-empty and compact, there exists $\mathbf{u}^\star \in \mathcal{U}$ such that $\mathbf{d}(\mathbf{x}) \mathbf{u}^\star = \sigma_{\mathcal{U}}(\mathbf{d}(\mathbf{x}))$. Substituting this into \eqref{eq:general_eps_bound} produces \eqref{eq:halfspace_form}, which further shows $\mathbf{u}^\star \in \mathcal{U} \cap \Pi_{\varepsilon}(\mathbf{x})$ and completes the proof.
\end{proof}

\begin{Rmk}\label{rem:well_defined_bound}
The condition $c(\mathbf{x})+\sigma_{\mathcal{U}}(\mathbf{d}(\mathbf{x}))>0$ in Lemma~\ref{lem:general_compatibility} is a necessary and unavoidable requirement for enforcing TISSf under input constraints. Recall $\sigma_{\mathcal{U}}(\mathbf{d}(\mathbf{x})) \triangleq \max_{\mathbf{u}\in\mathcal{U}} \mathbf{d}(\mathbf{x})\mathbf{u}$. When $\varepsilon(h(\mathbf{x}))\to\infty$, the TISSf condition~\eqref{HOCBF_ISSf} reduces to the nominal CBF half-space $\Pi_{\infty}(\mathbf{x})\!:=\!\{\mathbf{u}:\mathbf{d}(\mathbf{x})\mathbf{u}\ge -c(\mathbf{x})\}$ (cf.~\eqref{CBF_Condition}), which is compatible with $\mathcal{U}$ if and only if $c(\mathbf{x})+\sigma_{\mathcal{U}}(\mathbf{d}(\mathbf{x}))\ge 0$. If $c(\mathbf{x})+\sigma_{\mathcal{U}}(\mathbf{d}(\mathbf{x}))< 0$, then $\Pi_{\infty}(\mathbf{x})\cap\mathcal{U}=\emptyset$ and hence no tuning can recover compatibility: for any finite $\varepsilon(h(\mathbf{x}))$, the TISSf admissible set $\Pi_{\varepsilon}(\mathbf{x})=\{\mathbf{u}:\mathbf{d}(\mathbf{x})\mathbf{u}\ge \|\mathbf{d}(\mathbf{x})\|^2/\varepsilon(h(\mathbf{x}))-c(\mathbf{x})\}$ satisfies $\Pi_{\varepsilon}(\mathbf{x})\subseteq \Pi_{\infty}(\mathbf{x})$. Furthermore, Lemma~\ref{lem:general_compatibility} is stated with $c(\mathbf{x})+\sigma_{\mathcal{U}}(\mathbf{d}(\mathbf{x}))>0$ to exclude the boundary case $c(\mathbf{x})+\sigma_{\mathcal{U}}(\mathbf{d}(\mathbf{x}))=0$, where the lower bound in~\eqref{eq:general_eps_bound} becomes singular (i.e., the right-hand side becomes unbounded). Therefore, throughout the paper we restrict attention to the (nominally) compatible regime $c(\mathbf{x})+\sigma_{\mathcal{U}}(\mathbf{d}(\mathbf{x}))>0$.
\end{Rmk}

Lemma~\ref{lem:general_compatibility} establishes the pointwise compatibility of the TISSf-CBF constraint at a fixed state $\mathbf{x}$. 
However, in practice, a tuning function $\varepsilon(\cdot)$ is typically selected offline through fixed parameterizations and must remain valid across the entire safe set $\mathcal{C}$. As shown in Example~\ref{Typical_Example}, a tuning function may satisfy the input constraints at certain states but fail to guarantee such satisfaction throughout the trajectory (e.g., the case $\ln\epsilon_{0}=-4$ in Fig. 2 (d)).  Therefore, lifting this condition from a fixed state $\mathbf{x}$ to the entire safe set $\mathcal{C}$ is a prerequisite to characterizing the functional space $\mathcal{E}$ required by Problem~\ref{Problem_Formulation} and to ensuring that the selected $\varepsilon(\cdot)$ remains implementable (i.e., $\Pi_{\varepsilon}(\mathbf x)\cap \mathcal U \neq \emptyset$) along any feasible closed-loop trajectory.

\begin{Cor}\label{global_compatibility}
By Lemma~\ref{lem:general_compatibility}, the set $\mathcal{E}$ defined in Problem~\ref{Problem_Formulation} admits the characterization
\begin{equation}\label{compatible_chara}
\mathcal{E} = \left\{ \varepsilon:\mathbb{R}\to\mathbb{R}_{>0} \ \middle|\ \eqref{eq:general_eps_bound} \text{ holds } \forall \mathbf{x}\in\mathcal{C} \right\}.
\end{equation}
\end{Cor}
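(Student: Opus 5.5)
The plan is to prove the set equality in \eqref{compatible_chara} by double inclusion, applying Lemma~\ref{lem:general_compatibility} at each fixed state $\mathbf{x}\in\mathcal{C}$. The membership condition defining $\mathcal{E}$ in Problem~\ref{Problem_Formulation} is a universally quantified pointwise statement: $\Pi_{\varepsilon}(\mathbf{x})\cap\mathcal{U}\neq\emptyset$ for every $\mathbf{x}\in\mathcal{C}$. Lemma~\ref{lem:general_compatibility} says that, at any single $\mathbf{x}$ for which $c(\mathbf{x})+\sigma_{\mathcal{U}}(\mathbf{d}(\mathbf{x}))>0$, this intersection condition is equivalent to \eqref{eq:general_eps_bound}. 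The corollary should therefore follow by quantifying that pointwise equivalence over $\mathcal{C}$.

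\textbf{The standing assumption.} First I would make explicit the hypothesis that $c(\mathbf{x})+\sigma_{\mathcal{U}}(\mathbf{d}(\mathbf{x}))>0$ for all $\mathbf{x}\in\mathcal{C}$. This is the nominally compatible regime adopted throughout the paper in Remark~\ref{rem:well_defined_bound}. Two facts from that remark justify it. If $c(\mathbf{x})+\sigma_{\mathcal{U}}(\mathbf{d}(\mathbf{x}))<0$ at some $\mathbf{x}\in\mathcal{C}$, then $\Pi_\varepsilon(\mathbf{x})\subseteq\Pi_\infty(\mathbf{x})$ and $\Pi_\infty(\mathbf{x})\cap\mathcal{U}=\emptyset$. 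Hence $\mathcal{E}=\emptyset$, and the bound \eqref{eq:general_eps_bound} is either meaningless or cannot be satisfied by a positive $\varepsilon$, because its right-hand side is nonpositive only when $\mathbf{d}(\mathbf{x})=\mathbf{0}$. So the characterization should be read under this regime.

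\textbf{Double inclusion.} For ``$\subseteq$'', take $\varepsilon\in\mathcal{E}$ and any $\mathbf{x}\in\mathcal{C}$. Then $\Pi_\varepsilon(\mathbf{x})\cap\mathcal{U}\neq\emptyset$, and the necessity direction of Lemma~\ref{lem:general_compatibility} gives \eqref{eq:general_eps_bound} at $\mathbf{x}$. Since $\mathbf{x}$ is arbitrary, \eqref{eq:general_eps_bound} holds on all of $\mathcal{C}$. For ``$\supseteq$'', take $\varepsilon$ satisfying \eqref{eq:general_eps_bound} for all $\mathbf{x}\in\mathcal{C}$. The sufficiency direction of the lemma yields a maximizer $\mathbf{u}^\star(\mathbf{x})\in\mathcal{U}\cap\Pi_\varepsilon(\mathbf{x})$, which exists because $\mathcal{U}$ is compact. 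Hence the intersection is nonempty at every $\mathbf{x}\in\mathcal{C}$, so $\varepsilon\in\mathcal{E}$.

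\textbf{The main obstacle.} The only delicate point is the degenerate case $\mathbf{d}(\mathbf{x})=\mathbf{0}$. There the right-hand side of \eqref{eq:general_eps_bound} is $0$, and the inequality holds trivially for every positive $\varepsilon$. Compatibility then reduces to $c(\mathbf{x})\geq 0$, which the standing assumption guarantees, since $\sigma_{\mathcal{U}}(\mathbf{0})=0$. So the equivalence still holds in this case, but I would verify it explicitly rather than rely on the division step in the lemma's proof. No uniformity or continuity argument is needed, because the defining condition of $\mathcal{E}$ is purely pointwise.
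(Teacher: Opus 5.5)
Your double-inclusion argument is correct and matches the paper, which gives no separate proof and simply applies Lemma~\ref{lem:general_compatibility} pointwise over $\mathcal{C}$ in the standing regime $c(\mathbf{x})+\sigma_{\mathcal{U}}(\mathbf{d}(\mathbf{x}))>0$ of Remark~\ref{rem:well_defined_bound}; your separate check at $\mathbf{d}(\mathbf{x})=\mathbf{0}$ is harmless but unnecessary, since the lemma only divides by $c(\mathbf{x})+\sigma_{\mathcal{U}}(\mathbf{d}(\mathbf{x}))$ and never by $\|\mathbf{d}(\mathbf{x})\|_2$. One correction to your justification of that regime: if $c(\mathbf{x})+\sigma_{\mathcal{U}}(\mathbf{d}(\mathbf{x}))<0$, the right-hand side of \eqref{eq:general_eps_bound} is nonpositive, so the bound is satisfied by \emph{every} positive $\varepsilon$ rather than by none, while compatibility fails at that state (so $\mathcal{E}=\emptyset$) --- the bound becomes vacuous there, and that is exactly why the characterization needs the assumption.
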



In the following, we combine Theorem~\ref{ISSf_Theory} with Corollary~\ref{global_compatibility} to establish the existence of a Lipschitz controller $\mathbf{k}$ satisfying $\mathbf{k}(\mathbf{x})\in\Pi_{\varepsilon}(\mathbf{x})\cap\mathcal{U}$ for all $\mathbf{x}\in\mathcal{C}$, and consequently the safety of the perturbed system with respect to $\mathcal{C}_{\bm{\omega}}$, as stated next.

\begin{Thm}\label{thm:input_constrained_tissf}
Let $\mathcal{C}\subset\mathbb{R}^{n}$ be the $0$-superlevel set of a continuously differentiable function $h:\mathbb{R}^{n}\rightarrow\mathbb{R}$ and $\delta\geq 0$.
Let the assumptions of Theorem~\ref{ISSf_Theory} hold and let $\mathcal U$ be a nonempty compact input set. 
Suppose the tuning function $\varepsilon: \mathbb{R} \to \mathbb{R}_{>0}$ is continuously differentiable and satisfies the monotonicity condition~\eqref{TISSf_Function_Limit} and $\varepsilon(\cdot)\in\mathcal{E}$.
If there exists a Lipschitz continuous controller $\mathbf{k}(\mathbf{x})$ satisfying
$\mathbf{k}(\mathbf{x})\in \Pi_{\varepsilon}(\mathbf{x})\cap \mathcal U,  \forall\,\mathbf{x}\in\mathcal C$,
then the perturbed system~\eqref{Affine_Control_Disturbed} is safe with respect to $\mathcal C_{\bm{\omega}}$ with the same robustness margin $\zeta(h(\mathbf{x}),\delta)$ as in Theorem~\ref{ISSf_Theory}.
\end{Thm}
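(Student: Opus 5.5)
The plan is a direct reduction to Theorem~\ref{ISSf_Theory}, with the input constraint contributing nothing except the nonemptiness needed for the hypothesis to make sense. Since $\mathbf{k}(\mathbf{x})\in \Pi_{\varepsilon}(\mathbf{x})\cap \mathcal U\subseteq \Pi_{\varepsilon}(\mathbf{x})$ for all $\mathbf{x}\in\mathcal C$, the controller is in particular a Lipschitz continuous selection from the TISSf admissible set on $\mathcal C$. The standing assumptions of Theorem~\ref{ISSf_Theory} are assumed here: $h$ is a TISSf-CBF, $\partial h/\partial\mathbf{x}\neq \mathbf{0}$ on $\partial\mathcal C$, $\varepsilon$ is $C^1$, positive, and satisfies \eqref{TISSf_Function_Limit}. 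Hence Theorem~\ref{ISSf_Theory} applies verbatim and yields forward invariance of $\mathcal C_{\bm{\omega}}$ for \eqref{Affine_Control_Disturbed}, with $\zeta(h(\mathbf{x}),\delta)=-\alpha^{-1}(-\varepsilon(h(\mathbf{x}))\delta^2/4)$. The margin is literally the same function because it depends only on $\alpha$, $\varepsilon$ and $\delta$, none of which are altered by imposing $\mathcal U$.

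I would order the argument in three steps. First, note that $\varepsilon\in\mathcal E$ together with Corollary~\ref{global_compatibility} (through Lemma~\ref{lem:general_compatibility}) guarantees $\Pi_{\varepsilon}(\mathbf{x})\cap\mathcal U\neq\emptyset$ for every $\mathbf{x}\in\mathcal C$. So the hypothesis on $\mathbf{k}$ is not vacuous, and the requirement $\mathbf{u}=\mathbf{k}(\mathbf{x})\in\mathcal U$ is consistent pointwise. Second, record the inclusion $\Pi_{\varepsilon}\cap\mathcal U\subseteq\Pi_{\varepsilon}$ and invoke Theorem~\ref{ISSf_Theory}. Third, observe that the applied input $\mathbf{k}(\mathbf{x}(t))$ lies in $\mathcal U$ along the closed-loop trajectory. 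This holds as long as $\mathbf{x}(t)\in\mathcal C$; otherwise it holds to whatever extent $\mathbf{k}$ is specified off $\mathcal C$. This gives the input-constraint satisfaction that the theorem implicitly advertises.

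The only delicate point is the domain issue: Theorem~\ref{ISSf_Theory} asks for $\mathbf{k}(\mathbf{x})\in\Pi_\varepsilon(\mathbf{x})$ on $\mathcal C$, while safety is asserted for the larger set $\mathcal C_{\bm{\omega}}$. I would not re-derive that argument. Instead I would state that the hypotheses here are exactly those of Theorem~\ref{ISSf_Theory}, read in the same sense as in \cite{alan2021safe}, so any extension of the condition to a neighborhood of $\mathcal C_{\bm{\omega}}$ used there carries over unchanged. If a self-contained check is wanted, I would sketch the usual Lyapunov-type step. Set $h_{\omega}=h+\zeta(h,\delta)$. Using \eqref{HOCBF_ISSf} and Young's inequality $\mathbf{d}\bm{\omega}\ge -\|\mathbf{d}\|^2/\varepsilon-\varepsilon\delta^2/4$, one obtains $\dot h\ge-\alpha(h)-\varepsilon\delta^2/4$. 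On $\partial\mathcal C_{\bm{\omega}}$ this gives $\dot h\ge 0$, and monotonicity \eqref{TISSf_Function_Limit} ensures $\partial\zeta/\partial h$ has the right sign so that $\dot h_\omega\ge 0$ there. Nagumo's theorem then closes the argument.
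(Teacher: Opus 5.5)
Your proposal is correct and follows the paper's proof essentially verbatim: use $\varepsilon\in\mathcal{E}$ with Corollary~\ref{global_compatibility} to get $\Pi_{\varepsilon}(\mathbf{x})\cap\mathcal U\neq\emptyset$ on $\mathcal C$, note that $\mathbf{k}(\mathbf{x})\in\Pi_{\varepsilon}(\mathbf{x})$, and invoke Theorem~\ref{ISSf_Theory} to obtain safety of $\mathcal C_{\bm{\omega}}$ with the unchanged margin $\zeta$. Your extra observation is also valid: for $\delta>0$, $\partial\mathcal C_{\bm{\omega}}$ lies outside $\mathcal C$, so the Nagumo step really needs the condition on $\mathbf{k}$ on a neighborhood of $\mathcal C_{\bm{\omega}}$, and the paper's proof inherits this requirement silently from its statement of Theorem~\ref{ISSf_Theory}.
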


\begin{proof}
Since $\varepsilon(\cdot)\in\mathcal{E}$, Corollary~\ref{global_compatibility} guarantees that $\Pi_{\varepsilon}(\mathbf{x})\cap\mathcal U\neq\emptyset$ for all $\mathbf{x}\in\mathcal C$.
By assumption, there exists a Lipschitz continuous controller $\mathbf{k}(\mathbf{x})$ such that
$\mathbf{k}(\mathbf{x})\in \Pi_{\varepsilon}(\mathbf{x})\cap\mathcal U$ for all $\mathbf{x}\in\mathcal C$. In particular, $\mathbf{k}(\mathbf{x})\in \Pi_{\varepsilon}(\mathbf{x})$ for all $\mathbf{x}\in\mathcal C$.
Therefore, by utilizing Theorem~\ref{ISSf_Theory}, safety with respect to $\mathcal C_{\bm{\omega}}$ and the robustness margin $\zeta(h(\mathbf{x}),\delta)$ follows directly from Theorem~\ref{ISSf_Theory}. 
\end{proof}

\subsection{Tuning Conditions for Common Input Sets}\label{Main_B}
Next, we make the compatibility condition \eqref{eq:general_eps_bound} (and correspondingly the set characterization in \eqref{compatible_chara}) explicit for several common input constraint sets. In the following, we discuss norm-bounded, polyhedral, and box input constraints.
\subsubsection{Norm-Bounded Input Constraints}
We first consider the Euclidean norm-bounded admissible control set
\begin{equation}\label{eq:control_input_limit}
    \mathcal{U}=\{\mathbf{u}\in\mathbb{R}^{m}\mid\|\mathbf{u}\|_{2}\leq \gamma\},
\end{equation}
where $\gamma>0$ is constant. For \eqref{eq:control_input_limit}, the support function is
\begin{equation}\label{eq:support_norm}
    \sigma_{\mathcal{U}}(\mathbf{d}(\mathbf{x}))
    =
    \max_{\|\mathbf{u}\|_{2}\le \gamma}\mathbf{d}(\mathbf{x})^\top\mathbf{u}
    =
    \gamma\|\mathbf{d}(\mathbf{x})\|_{2}.
\end{equation}
Substituting \eqref{eq:support_norm} into \eqref{eq:general_eps_bound} yields the explicit lower bound. Therefore, Theorem~\ref{thm:input_constrained_tissf} guarantees the existence of a
controller satisfying both the TISSf-CBF condition and the norm-bounded input constraint.

\subsubsection{Polyhedral Input Constraints}
We next consider polyhedral admissible control sets
\begin{equation}\label{eq:polyhedral_constraint}
    \mathcal{U} = \{\mathbf{u}\in\mathbb{R}^{m} \mid \mathbf{A}\mathbf{u} \le \mathbf{b}\},
\end{equation}
with $\mathbf{A}\in\mathbb{R}^{p\times m}$, $\mathbf{b}\in\mathbb{R}^{p}$, and $\mathcal{U}$
nonempty and compact. The support function is the optimal value of a LP and admits the dual form
\begin{equation}\label{eq:support_polyhedral}
    \sigma_{\mathcal{U}}(\mathbf{d}(\mathbf{x}))
    =
    \max_{\mathbf{A}\mathbf u\le \mathbf b}\mathbf{d}(\mathbf{x})^\top \mathbf{u}
    =
    \min_{\substack{\boldsymbol{\lambda}\ge \mathbf{0}\\ \mathbf{A}^\top\boldsymbol{\lambda}= \mathbf{d}(\mathbf{x})}}
    \boldsymbol{\lambda}^\top\mathbf{b},
\end{equation}
where strong duality holds since $\mathcal{U}$ is assumed to be nonempty and compact, so the primal LP is feasible and attains a finite optimum. In general, evaluating \eqref{eq:support_polyhedral} may require an LP solver (or an
explicit multiparametric solution), but Theorem~\ref{thm:input_constrained_tissf}
remains valid  for any tuning function $\varepsilon(\cdot)$ such that it belongs to $\mathcal{E}$ with \(\sigma_{\mathcal U}\) given by \eqref{eq:support_polyhedral}.

\subsubsection{Box Input Constraints}
As a special case of \eqref{eq:polyhedral_constraint} with $\mathbf{A}$ and $\mathbf{b}$ chosen to encode componentwise bounds, we obtain box input constraints as follows.
\begin{equation}\label{eq:box_constraint}
    \mathcal{U}=\{\mathbf{u}\in\mathbb{R}^{m}\mid\mathbf{u}_{\min}\leq\mathbf{u}\leq\mathbf{u}_{\max}\},
\end{equation}
with constant componentwise bounds \(\mathbf u_{\min},\mathbf u_{\max}\).
Then the support function decouples componentwise:
\begin{equation}\label{eq:support_box}
\begin{split}
    \sigma_{\mathcal{U}}(\mathbf{d}(\mathbf{x}))
    & =
    \sum_{i=1}^{m}
    \max\{d_i(\mathbf{x})\,u_{\max,i},\ d_i(\mathbf{x})\,u_{\min,i}\}
\end{split}
\end{equation}
Substituting~\eqref{eq:support_box} into~\eqref{eq:general_eps_bound} yields an explicit lower bound on the tuning function $\varepsilon(\cdot)$. Note that, with the tuning function $\varepsilon$ designed to satisfy \eqref{eq:general_eps_bound} $\forall \mathbf{x}\in\mathcal{C}$, the specific control input $\mathbf{u}$ can then be computed by solving the QP~\eqref{eq:TISSf_QP} in Section~\ref{subsec:controller_synthesis}.

\subsection{Parametric Tuning Function}\label{subsec:param_tuning}
Section~\ref{Main_A} demonstrates that input constraints do not merely serve as a posteriori compatibility checks. Instead, they restrict the admissible tuning function $\varepsilon(\cdot)$ by requiring $\Pi_{\varepsilon}(\mathbf{x})\cap\mathcal{U}\neq\emptyset$ for all $\mathbf{x}\in\mathcal{C}$. 
Consequently, by designing $\varepsilon(\cdot)$ offline such that it belongs to the function space $\mathcal{E}$ defined in \eqref{compatible_chara}, we guarantee the TISSf of system~\eqref{Affine_Control_Disturbed} while strictly respecting the prescribed input constraints. To transform this infinite-dimensional functional design problem into a tractable finite-dimensional parameter design problem, we proceed by parameterizing $\varepsilon(\cdot)$ and recasting the compatibility requirement as explicit constraints on the design parameters.
\subsubsection{Parametric tuning function}
Following~\cite{alan2021safe}, we parameterize the tuning function $\varepsilon(\cdot)$ as
\begin{equation}\label{eq:exp_eps_def}
    \varepsilon(h(\mathbf{x})) \triangleq \epsilon_{0} e^{\lambda h(\mathbf{x})},
    \quad \epsilon_{0}>0,\ \lambda>0.
\end{equation}
This choice guarantees $\varepsilon(h(\mathbf{x}))>0$ and yields a smooth, strictly increasing dependence on the safety measure $h(\mathbf{x})$, which induces a state-dependent reduction of safety conservatism. 
Specifically, as the state moves away from the unsafe set, the increase in $h(\mathbf{x})$ raises the tuning value $\varepsilon(h(\mathbf{x}))$, thereby shrinking the robustness margin $\|\mathbf{d}(\mathbf{x})\|_{2}^2/\varepsilon(h(\mathbf{x}))$ and expanding the admissible input half-space $\Pi_{\varepsilon}(\mathbf{x})$. 
Furthermore, $\lambda$ characterizes the constant relaxation rate of this constraint, as $\frac{d}{dh}\log\varepsilon(h)=\lambda$ implies that each unit increase in the safety margin $h$ scales the admissible input flexibility by a factor of $e^\lambda$. In the following,  we will show that this two-parameter form allows the compatibility requirement to be expressed as affine inequalities in $(\ln\epsilon_0,\lambda)$, enabling systematic offline tuning and direct comparison with~\cite{alan2021safe}.
\begin{Cor}\label{lem:exp_param_feas_pointwise}
Fix any state $\mathbf{x}\in\mathbb{R}^n$ and suppose that
$c(\mathbf{x})+\sigma_{\mathcal{U}}(\mathbf{d}(\mathbf{x}))>0$ whenever $\|\mathbf{d}(\mathbf{x})\|_{2}>0$.
Let $\varepsilon(\cdot)$ be given by the exponential tuning function~\eqref{eq:exp_eps_def} and define
\begin{equation}\label{eq:eta_def_pw}
    \eta(\mathbf{x}) \triangleq \ln\|\mathbf{d}(\mathbf{x})\|_{2}^{2}
    - \ln\!\big(c(\mathbf{x})+\sigma_{\mathcal{U}}(\mathbf{d}(\mathbf{x}))\big).
\end{equation}
Then the TISSf-CBF constraint~\eqref{HOCBF_ISSf} is compatible with the input constraint~\eqref{general_constraint}
at $\mathbf{x}$, i.e., $\Pi_{\varepsilon}(\mathbf{x})\cap\mathcal{U}\neq\emptyset$, if and only if
\begin{equation}\label{eq:param_halfspace_pw}
    \ln\epsilon_{0} + \lambda h(\mathbf{x}) \;\ge\; \eta(\mathbf{x}).
\end{equation}
\end{Cor}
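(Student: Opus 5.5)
The plan is to reduce the statement to Lemma~\ref{lem:general_compatibility} and then take logarithms, which is legitimate because every quantity involved is strictly positive. Fix $\mathbf{x}$. By the standing hypothesis, whenever $\|\mathbf{d}(\mathbf{x})\|_2>0$ we have $c(\mathbf{x})+\sigma_{\mathcal U}(\mathbf{d}(\mathbf{x}))>0$. Hence the assumption of Lemma~\ref{lem:general_compatibility} holds, and compatibility $\Pi_\varepsilon(\mathbf{x})\cap\mathcal U\neq\emptyset$ is equivalent to \eqref{eq:general_eps_bound} with $\varepsilon(h(\mathbf{x}))=\epsilon_0e^{\lambda h(\mathbf{x})}$.

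Both sides of \eqref{eq:general_eps_bound} are then strictly positive. The left side is positive because $\epsilon_0>0$. The right side is a quotient of $\|\mathbf{d}(\mathbf{x})\|_2^2>0$ and a positive denominator. Since $\ln$ is strictly increasing on $(0,\infty)$, \eqref{eq:general_eps_bound} is equivalent to
\[
\ln\epsilon_0+\lambda h(\mathbf{x})\ \ge\ \ln\|\mathbf{d}(\mathbf{x})\|_2^2-\ln\big(c(\mathbf{x})+\sigma_{\mathcal U}(\mathbf{d}(\mathbf{x}))\big)=\eta(\mathbf{x}).
\]
This is exactly \eqref{eq:param_halfspace_pw}, and it settles the nondegenerate case.

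The main obstacle is the degenerate case $\mathbf{d}(\mathbf{x})=\mathbf{0}$. There $\eta(\mathbf{x})$ involves $\ln 0$, and the positivity hypothesis on the denominator is not assumed. I would handle it directly. When $\mathbf{d}(\mathbf{x})=\mathbf{0}$, the set $\Pi_\varepsilon(\mathbf{x})$ reduces to $\{\mathbf u: c(\mathbf{x})\ge 0\}$, so it is either all of $\mathbb R^m$ or empty, independent of $\varepsilon$. I would adopt the convention $\ln 0=-\infty$, under which \eqref{eq:param_halfspace_pw} holds trivially. I would then note that the equivalence in this case requires $c(\mathbf{x})\ge 0$; this is natural in the compatible regime of Remark~\ref{rem:well_defined_bound}, since $\sigma_{\mathcal U}(\mathbf 0)=0$. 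So I would either state the corollary for states with $\|\mathbf{d}(\mathbf{x})\|_2>0$, or add this convention and the requirement $c(\mathbf{x})\ge0$ explicitly.

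Finally, I would remark that \eqref{eq:param_halfspace_pw} is affine in $(\ln\epsilon_0,\lambda)$ for fixed $\mathbf{x}$, which is what the later LP-based offline design relies on.
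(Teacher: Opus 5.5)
Your proof is correct and follows the paper's argument: invoke Lemma~\ref{lem:general_compatibility}, note that both sides of the resulting bound are positive, and take logarithms. Your separate treatment of $\mathbf{d}(\mathbf{x})=\mathbf{0}$ is a genuine refinement the paper omits. Its proof simply asserts that both sides are strictly positive, which fails there, and well-posedness is only restored later by the assumption $\|\mathbf{d}(\mathbf{x})\|\ge\underline d>0$ in Remark~\ref{rem:eta_well_defined}. One caveat: when $c(\mathbf{x})=0$, your $\ln 0=-\infty$ convention still leaves $\eta(\mathbf{x})$ undefined, since it becomes $-\infty-(-\infty)$. Restricting the statement to $\|\mathbf{d}(\mathbf{x})\|_2>0$ is therefore the cleaner fix.
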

\begin{proof}
By Lemma~\ref{lem:general_compatibility}, the compatibility condition $\Pi_{\varepsilon}(\mathbf{x})\cap\mathcal{U}\neq\emptyset$ holds if and only if
\begin{equation}\label{eq:equiv_step}
    \epsilon_{0}e^{\lambda h(\mathbf{x})} \ge \frac{\|\mathbf{d}(\mathbf{x})\|_{2}^{2}}{c(\mathbf{x})+\sigma_{\mathcal{U}}(\mathbf{d}(\mathbf{x}))}.
\end{equation}
Under the assumption $c(\mathbf{x})+\sigma_{\mathcal{U}}(\mathbf{d}(\mathbf{x}))>0$ whenever $\|\mathbf{d}(\mathbf{x})\|_{2}>0$, both sides of \eqref{eq:equiv_step} are strictly positive. Taking the logarithm of both sides preserves the logical equivalence:
\begin{equation}
    \ln\epsilon_{0} + \lambda h(\mathbf{x}) \ge \ln\|\mathbf{d}(\mathbf{x})\|_{2}^2 - \ln\big(c(\mathbf{x})+\sigma_{\mathcal{U}}(\mathbf{d}(\mathbf{x}))\big).
\end{equation}
Substituting the definition of $\eta(\mathbf{x})$ from \eqref{eq:eta_def_pw} directly yields \eqref{eq:param_halfspace_pw}, which completes the proof.
\end{proof}
\begin{Rmk}\label{rem:eta_well_defined}
In Lemma~\ref{lem:exp_param_feas_pointwise}, to avoid numerical degeneracy when evaluating $\eta(\mathbf{x})$, we further assume that on the compact set $\mathcal{C}$ there exist constants $\underline d,\underline s>0$ such that $\|\mathbf{d}(\mathbf{x})\|\ge \underline d$ and $c(\mathbf{x})+\sigma_{\mathcal{U}}(\mathbf{d}(\mathbf{x}))\ge \underline s$ for all $\mathbf{x}\in\mathcal{C}$. Under these lower bounds, $\eta(\mathbf{x})$ is well-defined.
\end{Rmk}
Lemma~\ref{lem:exp_param_feas_pointwise} characterizes how input constraints restrict the choice of tuning parameters $(\ln\epsilon_0, \lambda)$. In the absence of input constraints, the exponential tuning function remains valid for any $\epsilon_0, \lambda > 0$, since Theorem~\ref{ISSf_Theory} only requires the satisfaction of the condition~\eqref{TISSf_Function_Limit}. However, once input constraints are introduced, Lemma~\ref{lem:general_compatibility} implies that the tuning function must satisfy the lower bound in \eqref{eq:general_eps_bound} for compatibility. At each state $\mathbf{x}$, this condition is equivalent to requiring the parameters to lie within a half-space defined by the affine inequality~\eqref{eq:param_halfspace_pw}. Similar to the transition to Corollary~\ref{global_compatibility}, ensuring admissibility for all system trajectories requires these parameters to satisfy the condition over the entire safe set $\mathcal{C}$, leading to the following corollary for the parameterized function space.
\begin{Cor}\label{cor:parameter_set}
For the exponential tuning function defined in \eqref{eq:exp_eps_def}, $\varepsilon(\cdot) \in \mathcal{E}$ if and only if the parameter pair $(\ln\epsilon_0, \lambda)$ satisfies
\begin{equation}\label{eq:global_param_constraint}
    \ln\epsilon_{0} + \lambda h(\mathbf{x}) \ge \eta(\mathbf{x}), \quad \forall \mathbf{x} \in \mathcal{C}.
\end{equation}
\end{Cor}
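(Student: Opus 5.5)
The plan is to combine the set-level characterization of Corollary~\ref{global_compatibility} with the pointwise equivalence of Corollary~\ref{lem:exp_param_feas_pointwise}, applied at every state $\mathbf{x}\in\mathcal{C}$. Since membership $\varepsilon(\cdot)\in\mathcal{E}$ is by definition the quantified statement ``$\Pi_{\varepsilon}(\mathbf{x})\cap\mathcal{U}\neq\emptyset$ for all $\mathbf{x}\in\mathcal{C}$'', and Corollary~\ref{lem:exp_param_feas_pointwise} says that, for each fixed $\mathbf{x}$, this nonemptiness is equivalent to $\ln\epsilon_0+\lambda h(\mathbf{x})\ge\eta(\mathbf{x})$, the equivalence should pass through the universal quantifier directly.

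First, I would record the standing assumptions needed for $\eta$ to be well defined on all of $\mathcal{C}$. These are the bounds in Remark~\ref{rem:eta_well_defined}: $\|\mathbf{d}(\mathbf{x})\|\ge\underline d>0$ and $c(\mathbf{x})+\sigma_{\mathcal{U}}(\mathbf{d}(\mathbf{x}))\ge\underline s>0$ on $\mathcal{C}$. Under them, the hypothesis of Corollary~\ref{lem:exp_param_feas_pointwise} holds at every $\mathbf{x}\in\mathcal{C}$. I would also note that the exponential form~\eqref{eq:exp_eps_def} maps $\mathbb{R}$ into $\mathbb{R}_{>0}$, so it is an admissible candidate for $\mathcal{E}$.

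For the forward direction, assume $\varepsilon\in\mathcal{E}$ and fix an arbitrary $\mathbf{x}\in\mathcal{C}$. Then $\Pi_{\varepsilon}(\mathbf{x})\cap\mathcal{U}\neq\emptyset$, and the ``only if'' part of Corollary~\ref{lem:exp_param_feas_pointwise} gives~\eqref{eq:param_halfspace_pw} at that $\mathbf{x}$. Since $\mathbf{x}$ was arbitrary, \eqref{eq:global_param_constraint} follows. For the converse, assume~\eqref{eq:global_param_constraint}. The ``if'' part of the same corollary gives compatibility at each $\mathbf{x}\in\mathcal{C}$, hence $\varepsilon\in\mathcal{E}$. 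Equivalently, one can substitute $\varepsilon=\epsilon_0e^{\lambda h}$ into~\eqref{compatible_chara} and take logarithms pointwise.

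The main obstacle is the degenerate case, not the quantifier swap. Corollary~\ref{lem:exp_param_feas_pointwise} needs $c+\sigma_{\mathcal{U}}(\mathbf{d})>0$, and $\eta$ is undefined when $\mathbf{d}(\mathbf{x})=\mathbf{0}$. I would handle this by invoking the lower bounds of Remark~\ref{rem:eta_well_defined} as standing hypotheses. Alternatively, one can observe directly that at states with $\mathbf{d}(\mathbf{x})=\mathbf{0}$ compatibility reduces to $c(\mathbf{x})\ge0$, which holds under the nominal compatibility regime, so such states impose no constraint on the parameters; setting $\eta(\mathbf{x})=-\infty$ there keeps the equivalence intact.
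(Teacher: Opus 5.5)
Your proposal is correct and matches the paper's implicit argument. The paper obtains this corollary by applying the pointwise equivalence of Corollary~\ref{lem:exp_param_feas_pointwise} at every $\mathbf{x}\in\mathcal{C}$, in the same way Corollary~\ref{global_compatibility} lifts Lemma~\ref{lem:general_compatibility}, and your explicit treatment of the degenerate states (where $\eta$ is undefined) via the bounds of Remark~\ref{rem:eta_well_defined} makes precise an assumption the paper leaves implicit.
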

By satisfying the parametric constraint \eqref{eq:global_param_constraint}, the tuning function $\varepsilon(\cdot)$ belongs to the admissible space $\mathcal{E}$. Moreover, since $\varepsilon_{0}>0, \lambda > 0$ ensures the monotonicity condition \eqref{TISSf_Function_Limit}, this design constitutes a parametric realization of Theorem~\ref{thm:input_constrained_tissf}. 
Consequently, all conclusions of Theorem~\ref{thm:input_constrained_tissf} hold, which guarantees the existence of a Lipschitz controller $\mathbf{k}: \mathcal{C} \to \mathcal{U}$ that renders the perturbed system~\eqref{Affine_Control_Disturbed} TISSf.
\subsubsection{Parametric Tuning}\label{Pramter_tuning_method}
Corollary \ref{cor:parameter_set} characterizes the admissible tuning functions through an infinite family of inequalities indexed by $\mathbf{x} \in \mathcal{C}$. 
To obtain a tractable synthesis procedure, we replace this infinite-dimensional constraint with a finite set of constraints evaluated over a representative set of samples $\{\mathbf{x}_i\}_{i=1}^{N} \subset \mathcal{C}$. 
Specifically, we employ the concept of a $\kappa$-covering \cite{vershynin2018high} to discretize the safe set $\mathcal{C}$, ensuring that any point in the safe set $\mathcal{C}$ is within a distance $\kappa$ of at least one sample point. 
The following proposition establishes the theoretical guarantee for this sampling-based approach using Lipschitz continuity analysis.
\begin{Prop}\label{paramter_sample_condition}
Let $\mathcal{C}\subset\mathbb{R}^n$ be a compact set and let $\mathcal{X}_N$ be a $\kappa$-covering of $\mathcal{C}$, i.e., for any
$\mathbf{x}\in\mathcal{C}$ there exists $\mathbf{x}_i\in\mathcal{X}_N$
such that $\|\mathbf{x}-\mathbf{x}_i\|\le \kappa$, $\kappa\in\mathbb{R}_{>0}$.
Assume that $\eta(\cdot)$ and $h(\cdot)$ are Lipschitz continuous on $\mathcal{C}$
with constants $L_\eta>0$ and $L_h>0$, respectively.
If the parameter pair $(\ln\epsilon_{0}, \lambda)$ satisfies the robust sampled constraints
\begin{equation}\label{sampling_Lip}
\ln\epsilon_{0} + \lambda\big(h(\mathbf{x}_i)-L_h\kappa\big)
\ge
\eta(\mathbf{x}_i) + L_\eta\kappa,
\quad
\forall \mathbf{x}_i \in \mathcal{X}_N,
\end{equation}
then the tuning function $\varepsilon(h(\mathbf{x}))=\epsilon_{0} e^{\lambda h(\mathbf{x})}$ belongs to the admissible space $\mathcal{E}$,
and consequently $\Pi_{\varepsilon}(\mathbf{x}) \cap \mathcal{U} \neq \emptyset$ for all $\mathbf{x} \in \mathcal{C}$.
\end{Prop}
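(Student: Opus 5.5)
The plan is to reduce the claim to the pointwise parametric condition of Corollary~\ref{cor:parameter_set}: it suffices to show $\ln\epsilon_{0}+\lambda h(\mathbf{x})\ge\eta(\mathbf{x})$ for every $\mathbf{x}\in\mathcal{C}$. Once this holds, Corollary~\ref{cor:parameter_set} gives $\varepsilon(\cdot)\in\mathcal{E}$, and Corollary~\ref{global_compatibility} (equivalently Lemma~\ref{lem:general_compatibility} applied pointwise) gives $\Pi_{\varepsilon}(\mathbf{x})\cap\mathcal{U}\neq\emptyset$ for all $\mathbf{x}\in\mathcal{C}$. Throughout, the standing assumptions of Remark~\ref{rem:eta_well_defined} are used, so that $\eta$ is well defined on $\mathcal{C}$.

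Fix an arbitrary $\mathbf{x}\in\mathcal{C}$. By the covering property, pick $\mathbf{x}_i\in\mathcal{X}_N$ with $\|\mathbf{x}-\mathbf{x}_i\|\le\kappa$. Here I would assume $\mathcal{X}_N\subset\mathcal{C}$, as stated when the samples were introduced, so that the Lipschitz bounds on $\mathcal{C}$ apply to the pair $(\mathbf{x},\mathbf{x}_i)$. Lipschitz continuity of $h$ and $\eta$ then gives
\begin{equation*}
h(\mathbf{x})\ge h(\mathbf{x}_i)-L_h\kappa,\qquad \eta(\mathbf{x}_i)+L_\eta\kappa\ge\eta(\mathbf{x}).
\end{equation*}

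The only point needing care is the sign of $\lambda$. Since the parameterization \eqref{eq:exp_eps_def} requires $\lambda>0$, multiplying the first inequality by $\lambda$ preserves its direction. This yields the chain
\begin{equation*}
\ln\epsilon_{0}+\lambda h(\mathbf{x})\ \ge\ \ln\epsilon_{0}+\lambda\big(h(\mathbf{x}_i)-L_h\kappa\big)\ \ge\ \eta(\mathbf{x}_i)+L_\eta\kappa\ \ge\ \eta(\mathbf{x}),
\end{equation*}
where the middle inequality is exactly the sampled constraint \eqref{sampling_Lip}. Because $\mathbf{x}$ was arbitrary, \eqref{eq:global_param_constraint} holds on all of $\mathcal{C}$, and the conclusion follows from the corollaries cited above.

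I expect no real obstacle. The step deserving the most attention is the one that is easy to skip: checking that the sample points lie in $\mathcal{C}$, so the Lipschitz constants on $\mathcal{C}$ legitimately apply, and that $\eta$ is finite there. Both are covered by the assumptions already stated in the paper.
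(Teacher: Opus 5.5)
Your proof is correct and follows the paper's argument essentially verbatim: pick a covering point, apply the Lipschitz bounds for $h$ and $\eta$, use $\lambda>0$ to chain through the sampled constraint \eqref{sampling_Lip}, and conclude via Corollary~\ref{cor:parameter_set}. You also state explicitly that $\mathcal{X}_N\subset\mathcal{C}$ and $\lambda>0$ are needed, two points the paper's proof uses only implicitly.
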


\begin{proof}
Fix any $\mathbf{x}\in\mathcal{C}$. Since $\mathcal{X}_N$ is a $\kappa$-covering of $\mathcal{C}$,
there exists $\mathbf{x}_i\in\mathcal{X}_N$ such that $\|\mathbf{x}-\mathbf{x}_i\|\le\kappa$.
By the Lipschitz continuity of $\eta(\cdot)$ and $h(\cdot)$ on $\mathcal{C}$, we have $\eta(\mathbf{x}) \le \eta(\mathbf{x}_i) + L_\eta\|\mathbf{x}-\mathbf{x}_i\|
\le \eta(\mathbf{x}_i) + L_\eta\kappa$ and $h(\mathbf{x}) \ge h(\mathbf{x}_i) - L_h\|\mathbf{x}-\mathbf{x}_i\|
\ge h(\mathbf{x}_i) - L_h\kappa$.
Therefore,
\begin{equation}
\begin{split}
    \ln\epsilon_0 + \lambda h(\mathbf{x})
&\ge
\ln\epsilon_0 + \lambda\big(h(\mathbf{x}_i)-L_h\kappa\big)\\
&\overset{\eqref{sampling_Lip}}{\ge}
\eta(\mathbf{x}_i)+L_\eta\kappa
\ge
\eta(\mathbf{x}).
\end{split}
\end{equation}
Hence $\ln\epsilon_0+\lambda h(\mathbf{x})\ge \eta(\mathbf{x})$ for all $\mathbf{x}\in\mathcal{C}$.
By Corollary~2, this implies $\varepsilon(\cdot)\in\mathcal{E}$, which yields
$\Pi_{\varepsilon}(\mathbf{x})\cap\mathcal{U}\neq\emptyset$ for all $\mathbf{x}\in\mathcal{C}$.
\end{proof}
Based on Proposition~\ref{paramter_sample_condition}, the inequalities~\eqref{sampling_Lip}
define a polyhedral feasible set in the parameter space $(\ln\epsilon_0,\lambda)$,
since each constraint is affine in $(\ln\epsilon_0,\lambda)$.
We compute an optimal parameter pair within this set by solving the following LP:
\begin{small}
\begin{equation}\label{eq:lp_param_design}
\begin{aligned}
\min_{\ln\epsilon_0,\lambda} & \,\,\ln\epsilon_0 + \rho\lambda\\
\mathrm{s.t.} 
& \,\ln\epsilon_0 + \lambda\big(h(\mathbf{x}_i)-L_h\kappa\big)
\ge \eta(\mathbf{x}_i) + L_\eta\kappa, i=1,\dots,N,\\
& \,\lambda \ge \lambda_{\min} > 0,
\end{aligned}
\end{equation}
\end{small}

\noindent
where $\lambda_{\min}>0$ is a prescribed lower bound ensuring the
monotonicity of the tuning function
$\epsilon_0 e^{\lambda h}$.
The weighting parameter $\rho\ge 0$ regulates the trade-off between
the baseline magnitude $\ln\epsilon_0$ and the growth rate $\lambda$
of the tuning function.
After solving~\eqref{eq:lp_param_design}, the optimal parameter pair
$(\ln\epsilon_0^\star,\lambda^\star)$ can be obtained.
By Proposition~\ref{paramter_sample_condition}, the resulting tuning
function satisfies $\varepsilon(\cdot)\in\mathcal{E}$ and therefore
ensures $\Pi_{\varepsilon}(\mathbf{x})\cap\mathcal{U}\neq\emptyset$
for all $\mathbf{x}\in\mathcal{C}$.

\begin{Rmk}\label{rmk:Lip_constants}
The conservativeness of the condition~\eqref{sampling_Lip} comes from the robustness margins induced by the covering radius $\kappa$ and the Lipschitz constants $(L_\eta,L_h)$.
Specifically, the bound $\eta(\mathbf{x})\le \eta(\mathbf{x}_i)+L_\eta\kappa$ accounts for the discretization
error of $\eta(\cdot)$, while the term $\lambda L_h\kappa$
robustifies the evaluation of $h(\cdot)$ when transferring the constraints from samples to the entire set $\mathcal{C}$.
If $\eta(\cdot)$ and $h(\cdot)$ are continuously differentiable on the compact set $\mathcal{C}$, valid Lipschitz
constants can be chosen as $L_\eta=\sup_{\mathbf{x}\in\mathcal{C}}\|\nabla \eta(\mathbf{x})\|$, $L_h=\sup_{\mathbf{x}\in\mathcal{C}}\|\nabla h(\mathbf{x})\|$, which are finite due to compactness. In practice, $L_\eta$ and $L_h$ can be estimated via analytical gradient bounds
or numerical evaluations. Moreover, since $\kappa$ decreases with the sampling density, the discretization gap
can be reduced by denser sampling or tighter Lipschitz estimates, leading to a less conservative parameter design.
\end{Rmk}

\subsection{Controller Synthesis}\label{subsec:controller_synthesis}
After obtaining the optimal parameters $(\ln\epsilon_0^*, \lambda^*)$ for the tuning function $\varepsilon(h(\mathbf{x}))$ by solving the LP \eqref{eq:lp_param_design}, we have established a rigorous framework that guarantees the existence of an admissible control input for all $\mathbf{x} \in \mathcal{C}$. To realize the control law $\mathbf{k}: \mathcal{C} \to \mathcal{U}$ established in Theorem~\ref{thm:input_constrained_tissf}, we utilize a safety-filter formulation based on a QP. Specifically, given a nominal performance-oriented feedback controller $\mathbf{k}_{\mathrm{nom}}: \mathbb{R}^n \to \mathbb{R}^m$, the actual control input $\mathbf{u}^\star$ is synthesized at each time step as follows:
\begin{subequations}\label{eq:TISSf_QP}
\begin{align}
\mathbf{u}^\star(\mathbf{x}) = \arg\min & \quad \frac{1}{2} \|\mathbf{u} - \mathbf{k}_{\mathrm{nom}}(\mathbf{x})\|^2 \label{eq:QP_obj} \\
\text{s.t.} & \quad c(\mathbf{x}) + \mathbf{d}(\mathbf{x}) \mathbf{u} \ge \frac{\|\mathbf{d}(\mathbf{x})\|_{2}^2}{\epsilon_0^* e^{\lambda^* h(\mathbf{x})}}, \label{eq:QP_TISSf_cons} \\
& \quad \mathbf{u} \in \mathcal{U}, \label{eq:QP_input_cons}
\end{align}
\end{subequations}
where $c(\mathbf{x})$ and $\mathbf{d}(\mathbf{x})$ are defined as in \eqref{HOCBF_ISSf}. Crucially, since the parameters of $\varepsilon(\cdot)$ are chosen such that $\varepsilon \in \mathcal{E}$, the intersection of the TISSf-CBF half-space and the input constraint set $\mathcal{U}$ is non-empty, i.e., $\Pi_{\varepsilon}(\mathbf x)\cap \mathcal U \neq \emptyset$, for all safe states $\mathbf{x}\in\mathcal{C}$. This ensures that the QP \eqref{eq:TISSf_QP} is recursively feasible by design.
\section{Simulation Results}\label{sec:sim_results}
In this section, we validate the proposed TISSf (LP-QP) framework using an adapted version of the CCC case study in~\cite{alan2021safe} and compare it against multiple baselines: (i) TISSf (Baseline)~\cite{alan2021safe} with manually tuned parameters; (ii) TISSf (Sat), which enforces the satisfaction of input constraints via saturation; and (iii) TISSf (Trial), which further tunes $(\epsilon_0,\lambda)$ via trial-and-error. Unlike these approaches, our design restricts the tuning function to the admissible set $\mathcal{E}$ characterized in Section~\ref{TISSf_with_Input_Limits}, ensuring a priori compatibility with the prescribed input constraints over the entire safety set $\mathcal{C}$.

Controller synthesis is performed on the longitudinal model:
\begin{equation}\label{eq:truck_model}
\dot D = v_L - v,\quad \dot v = u + \omega(t),\quad \dot v_L = a_L,
\end{equation}
where $D$, $v$, and $v_L$ denote headway distance, ego velocity, and lead velocity, respectively. The perturbation $\omega(t)=1.2$ satisfies $\|\omega\|_{\infty}\leq \delta=1.2$. The primary control objective is to match the velocity of the leading vehicle $v_L$ while maintaining a safe headway. This is characterized by the nominal controller $k_{\mathrm{nom}}(\mathbf{x}) = k_1(V_D(D) - v) + k_2(v_L - v)$, with $k_1 = 0.85$ and $k_2 = 0.75$. The desired velocity $V_D(D)$ is defined as $V_D(D) = \min\{ \max\{0, K_{v}(D - D_{\mathrm{st}})\}, \bar{v} \}$, where $D_{\mathrm{st}} = 7$~m is the standstill distance, $K_{v} = 0.7$ is the velocity gain, and $\bar{v} = 20$~m/s is the ego vehicle's speed limit. The simulation scenario involves a leading vehicle that initially travels at $v_L = 15$~m/s and initiates a braking maneuver at $t = 5$~s with $a_L = -4~\mathrm{m/s^2}$ until it stops (see Fig.~\ref{fig:simulation} (b)). The ego vehicle's acceleration $u$ is strictly constrained to $[\underline{a}, \bar{a}]$ with $\underline{a} = -6~\mathrm{m/s^2}$ and $\bar{a} = 0.8~\mathrm{m/s^2}$. Safety is encoded by $h(\mathbf{x}) = D - \hat h(v,v_L) \ge 0$, where the safe distance policy $\hat{h}(v, v_L) = D_{sf} + \theta v + \eta v_L + \xi v^2 + \zeta v v_L + \omega v_L^2$ follows the parameters: $D_{\mathrm{sf}} = 2$, $\theta = 1.1$, $\eta = 0.6$, $\xi = 0.03$, $\zeta = -0.03$, and $\omega = -0.03$. 

To implement the proposed safety-critical control, we first solve the offline robust sampled LP \eqref{eq:lp_param_design} to obtain the optimal tuning parameters $(\ln\epsilon_0^*, \lambda^*)$. Following Proposition~\ref{paramter_sample_condition}, the safe set $\mathcal{C}$ is discretized. We restrict $\mathcal{C}\subset\mathbb{R}^3$ to the compact domain 
$D\in[0,60]$~m, $v\in[0,\bar v]$~m/s, and $v_L\in[0,\bar v]$~m/s,
and sample $N=2000$ Latin-hypercube points with $\kappa=0.1$, where $L_h=\sup_{\mathbf{x}\in\mathcal{C}}\|\nabla h(\mathbf{x})\|$ and $L_\eta=\sup_{\mathbf{x}\in\mathcal{C}}\|\nabla \eta(\mathbf{x})\|$ are approximated by the maximum gradient norms over the samples. We use $L_h = 0.2$ and $L_\eta = 0.3$ to robustify the compatibility condition against discretization errors. With a weighting parameter $\rho = 12$, the LP yields $\epsilon_0^* = 5.6 \times 10^{-3}$ and $\lambda^* = 0.18$. These parameters define $\varepsilon(h) = \epsilon_0^* e^{\lambda^* h}$ and guarantee $\Pi_{\varepsilon}(\mathbf{x}) \cap \mathcal{U} \neq \emptyset$. The real-time control input is then synthesized by the QP-based safety filter \eqref{eq:TISSf_QP}. For comparison, TISSf (Baseline)~\cite{alan2021safe} uses $(\epsilon_0,\lambda)=(5\times10^{-4},0.1)$ and the fixed-form control law
$u = k_{\text{nom}}(\mathbf{x}) -1/(\epsilon_0 \mathrm{e}^{\lambda\left(D-\hat{h}\left(v, v_{L}\right)\right)}) \frac{\partial \hat{h}}{\partial v}\left(v, v_{L}\right)$, while TISSf (Sat) applies saturation to enforce $u\in[\underline a,\bar a]$, and TISSf (Trial) selects $(\epsilon_0,\lambda)$ by trial-and-error and stops when both $h(\mathbf{x})+\zeta(h(\mathbf{x}),\delta)>0$ and $u\in[\underline a,\bar a]$ are satisfied.

The closed-loop performance is illustrated in Fig.~\ref{fig:simulation}. In Fig.~\ref{fig:simulation}(a), Proposed (LP-QP) maintains the smallest headway while preserving a safe distance, namely $D\ge 0$. TISSf (Trial) yields a larger headway than Proposed (LP-QP) but remains less conservative than TISSf (Baseline) and TISSf (Sat), which keep substantially larger distances. In Fig.~\ref{fig:simulation}(b), Proposed (LP-QP) and TISSf (Trial) achieve fast velocity tracking, whereas TISSf (Sat) exhibits noticeably slower transients under the tightened input limit. TISSf (Baseline) shows the most aggressive velocity response and may exhibit overshoot during the maneuver. Fig.~\ref{fig:simulation}(c) shows that the robust safety metric $h(\mathbf{x})+\zeta(h(\mathbf{x}),\delta)$ remains positive for all approaches, while TISSf (Baseline) is closer to the boundary. In Fig.~\ref{fig:simulation}(d), Proposed (LP-QP), TISSf (Sat), and TISSf (Trial) satisfy $u\in[\underline a,\bar a]$ throughout the maneuver, whereas TISSf (Baseline) can exhibit transients beyond the bounds. Overall, Proposed (LP-QP) achieves the best headway efficiency while enforcing safety and input constraints by design.
\vspace{-6pt}
\section{Conclusion}
This paper addresses a critical gap in the TISSf framework, where robust safety guarantees often conflict with input constraints. By reformulating the compatibility requirement as a tuning function design objective, we utilize support functions to derive an exact lower bound on the tuning function. This characterization provides a rigorous constructive procedure for the synthesis of tuning functions, ensuring input compatibility and recursive feasibility of the resulting QP-based safety filter under input constraints. Moreover, the proposed framework offers a unified treatment for different input constraints, as demonstrated through specialized conditions for norm-bounded, polyhedral, and box sets. Furthermore, by leveraging a covering-based sampling strategy, we develop a tractable offline LP procedure that upgrades pointwise compatibility to a formal compatibility guarantee over the entire safe set. Simulation results for a CCC application demonstrate that our approach resolves the compatibility issues in unconstrained TISSf synthesis by design.
\begin{figure}[tp]
    \centering
    \includegraphics[width=0.48\textwidth]{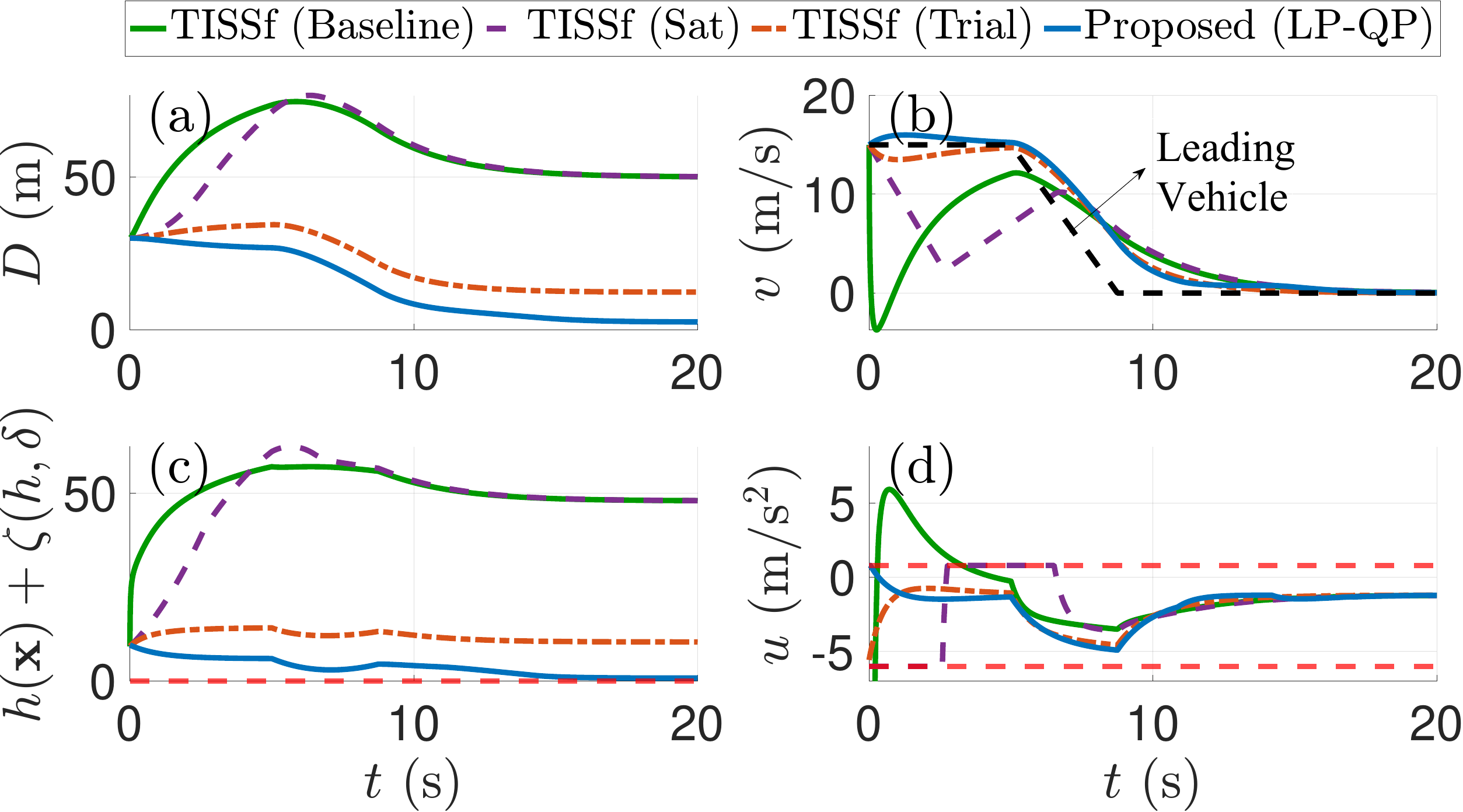}
  \caption{Comparative CCC simulation results under perturbations: (a) headway distance $D$; (b) ego vehicle velocity $v$ (the leading-vehicle profile is shown in the plot); (c) robust safety metric $h(\mathbf{x})+\zeta(h(\mathbf{x}),\delta)$, where values above zero indicate robust safety; and (d) control input $u(t)$ with acceleration limits (red dashed).}
    \label{fig:simulation}
\end{figure}



\bibliographystyle{IEEEtran}
\bibliography{Tunable_ISSf_Input.bib}

@article{Zeroing_CBF,
  title={Control barrier function based quadratic programs for safety critical systems},
  author={Ames, Aaron D and Xu, Xiangru and Grizzle, Jessy W and Tabuada, Paulo},
  journal={IEEE Transactions on Automatic Control},
  volume={62},
  number={8},
  pages={3861--3876},
  year={2016},
  publisher={IEEE}
}

@article{alan2021safe,
  title={Safe controller synthesis with tunable input-to-state safe control barrier functions},
  author={Alan, Anil and Taylor, Andrew J and He, Chaozhe R and Orosz, G{\'a}bor and Ames, Aaron D},
  journal={IEEE Control Systems Letters},
  volume={6},
  pages={908--913},
  year={2021},
  publisher={IEEE}
}

@inproceedings{agrawal2021safe,
  title={Safe control synthesis via input constrained control barrier functions},
  author={Agrawal, Devansh R and Panagou, Dimitra},
  booktitle={2021 60th IEEE Conference on Decision and Control (CDC)},
  pages={6113--6118},
  year={2021},
  organization={IEEE}
}

@article{breeden2023robust,
  title={Robust control barrier functions under high relative degree and input constraints for satellite trajectories},
  author={Breeden, Joseph and Panagou, Dimitra},
  journal={Automatica},
  volume={155},
  pages={111109},
  year={2023},
  publisher={Elsevier}
}

@article{mestres2022optimization,
  title={Optimization-based safe stabilizing feedback with guaranteed region of attraction},
  author={Mestres, Pol and Cort{\'e}s, Jorge},
  journal={IEEE Control Systems Letters},
  volume={7},
  pages={367--372},
  year={2022},
  publisher={IEEE}
}

@book{vershynin2018high,
  title={High-Dimensional Probability: An Introduction with Applications in Data Science},
  author={Vershynin, Roman},
  year={2018},
  publisher={Cambridge University Press}
}

@article{kolathaya2018input,
  title={Input-to-state safety with control barrier functions},
  author={Kolathaya, Shishir and Ames, Aaron D},
  journal={IEEE Control Systems Letters},
  volume={3},
  number={1},
  pages={108--113},
  year={2018},
  publisher={IEEE}
}

@article{xiao2021adaptive,
  title={Adaptive control barrier functions},
  author={Xiao, Wei and Belta, Calin and Cassandras, Christos G},
  journal={IEEE Transactions on Automatic Control},
  volume={67},
  number={5},
  pages={2267--2281},
  year={2021},
  publisher={IEEE}
}

@article{wieland2007constructive,
  title={Constructive safety using control barrier functions},
  author={Wieland, Peter and Allg{\"o}wer, Frank},
  journal={IFAC Proceedings Volumes},
  volume={40},
  number={12},
  pages={462--467},
  year={2007},
  publisher={Elsevier}
}

@article{tan2021high,
  title={High-order barrier functions: Robustness, safety, and performance-critical control},
  author={Tan, Xiao and Cortez, Wenceslao Shaw and Dimarogonas, Dimos V},
  journal={IEEE Transactions on Automatic Control},
  volume={67},
  number={6},
  pages={3021--3028},
  year={2021},
  publisher={IEEE}
}

@article{xiao2021high,
  title={High-order control barrier functions},
  author={Xiao, Wei and Belta, Calin},
  journal={IEEE Transactions on Automatic Control},
  volume={67},
  number={7},
  pages={3655--3662},
  year={2021},
  publisher={IEEE}
}

@article{lopez2020robust,
  title={Robust adaptive control barrier functions: An adaptive and data-driven approach to safety},
  author={Lopez, Brett T and Slotine, Jean-Jacques E and How, Jonathan P},
  journal={IEEE Control Systems Letters},
  volume={5},
  number={3},
  pages={1031--1036},
  year={2020},
  publisher={IEEE}
}

@inproceedings{taylor2020learning,
  title={Learning for safety-critical control with control barrier functions},
  author={Taylor, Andrew and Singletary, Andrew and Yue, Yisong and Ames, Aaron},
  booktitle={Learning for dynamics and control},
  pages={708--717},
  year={2020},
  organization={PMLR}
}

@inproceedings{castaneda2021gaussian,
  title={Gaussian process-based min-norm stabilizing controller for control-affine systems with uncertain input effects and dynamics},
  author={Castaneda, Fernando and Choi, Jason J and Zhang, Bike and Tomlin, Claire J and Sreenath, Koushil},
  booktitle={2021 American Control Conference (ACC)},
  pages={3683--3690},
  year={2021},
  organization={IEEE}
}

@inproceedings{romdlony2016new,
  title={On the new notion of input-to-state safety},
  author={Romdlony, Muhammad Zakiyullah and Jayawardhana, Bayu},
  booktitle={2016 IEEE 55th Conference on Decision and Control (CDC)},
  pages={6403--6409},
  year={2016},
  organization={IEEE}
}

@article{alan2023control,
  title={Control barrier functions and input-to-state safety with application to automated vehicles},
  author={Alan, Anil and Taylor, Andrew J and He, Chaozhe R and Ames, Aaron D and Orosz, G{\'a}bor},
  journal={IEEE Transactions on Control Systems Technology},
  volume={31},
  number={6},
  pages={2744--2759},
  year={2023},
  publisher={IEEE}
}

@article{jankovic2018robust,
  title={Robust control barrier functions for constrained stabilization of nonlinear systems},
  author={Jankovic, Mrdjan},
  journal={Automatica},
  volume={96},
  pages={359--367},
  year={2018},
  publisher={Elsevier}
}

\end{document}